\newif\ifdouble
\newcommand{\R}{\mathbb{R}}
\newcommand{\N}{\mathbb{N}}
\newtheorem{thm}{Theorem}
\newtheorem{defn}[thm]{Definition}
\newtheorem{prop}[thm]{Proposition}
\newtheorem{remark}{Remark}
\newtheorem{lemma}[thm]{Lemma}
\newtheorem{corollary}[thm]{Corollary}
\newtheorem{assum}{Assumption}
\title{Compositional abstraction and safety synthesis\\ using overlapping symbolic models}
\author{Pierre-Jean~Meyer,
        Antoine~Girard,
        and~Emmanuel~Witrant % <-this % stops a space
\thanks{P.-J. Meyer is with KTH Royal Institute of Technology, Department of Automatic Control, 10044 Stockholm, Sweden (email: pjmeyer@kth.se).}% <-this % stops a space
\thanks{A. Girard is with Laboratoire des signaux et syst\`emes (L2S), CNRS, CentraleSup\'elec, Universit\'e Paris-Sud, Universit\'e Paris-Saclay, 3, rue Joliot-Curie, 91192 Gif-sur-Yvette, cedex, France (email: Antoine.Girard@l2s.centralesupelec.fr).
His work was partially supported by the laboratory of excellence DigiCosme
(CODECSYS project).
}% <-this % stops a space
\thanks{E. Witrant is with Univ. Grenoble Alpes/CNRS, GIPSA-Lab, F-38000 Grenoble, France (email: Emmanuel.Witrant@ujf-grenoble.fr).}
}
\begin{document}

\maketitle

\begin{abstract}
In this paper, we develop a compositional approach to abstraction and safety synthesis for a general class of discrete time nonlinear systems.
Our approach makes it possible to define a symbolic abstraction by composing a set of symbolic subsystems that are overlapping in the sense that they can share some common state variables.
We develop compositional safety synthesis techniques using such overlapping symbolic subsystems.
Comparisons, in terms of conservativeness and of computational complexity, between abstractions and controllers obtained from different system decompositions are provided.
Numerical experiments show that the proposed approach for symbolic control synthesis enables a significant complexity reduction with respect to the centralized approach, while reducing the conservatism with respect to compositional approaches using non-overlapping subsystems.
\end{abstract}

% \begin{IEEEkeywords}
% IEEE, IEEEtran, journal, \LaTeX, paper, template.
% \end{IEEEkeywords}

%%%%%%%%%%%%%%%%%%%%%%%%%%%%%%%%%
%%%%%%%%%%%%%%%%%%%%%%%%%%%%%%%%%
\section{Introduction}
\label{sec intro}

Symbolic control deals with the use of discrete synthesis techniques for controlling complex continuous or hybrid systems~\cite{belta2007symbolic,tabuada2009symbolic}.
In such approaches, one relies on symbolic abstractions of the orignal system; i.e. dynamical systems with finitely many state and input values, each of which symbolizes  
sets of states and inputs of the concrete system~\cite{alur2000discrete}. This enables the use of discrete controller synthesis techniques, such as supervisory control~\cite{cassandras2009introduction}  or algorithmic game theory~\cite{bloem2012synthesis}, which allows us to address high-level specifications such as safety, reachability or more general properties specified by automata or temporal logic formula~\cite{baier2008principles}. When the behaviors of the concrete system and of its abstraction are related by some formal inclusion relationship (such as alternating simulation~\cite{tabuada2009symbolic} or feedback refinement relations~\cite{reissig2016}), the discrete controller of the abstraction can be refined to control the concrete system, with guarantees of correctness. 

Several approaches exist for computing symbolic abstractions for a wide range of dynamical systems (see e.g.~\cite{tabuada2006linear,pola2008approximately,zamani2012symbolic,zamani2014symbolic,coogan2015mixed,reissig2016}), based on partitions or discretizations of the state and input spaces. The numbers of symbolic states and inputs are then typically exponential in the  dimension of the concrete state and input spaces, respectively.
This limits the application of these approaches to low-dimensional systems.
Several works have been done for improving the scalability of symbolic control. 
In~\cite{le2013mode,zamani2015symbolic}, an approach, which does not require state space discretization, has been presented for computing symbolic abstractions of incrementally stable systems. In~\cite{pola2012integrated,girard2016safety}, algorithms combining discrete controller synthesis with on-the-fly computation of symbolic abstractions have been 
developed. Compositional approaches have also been explored in several papers~\cite{tazaki2008bisimilar,reissig2010abstraction,meyer2015adhs,boskos2015decentralized,kim2015compositional,dallal2015compositional,pola2016symbolic,pola2016decentralized}.
In such approaches, a system with a control specification is decomposed into subsystems with local control specifications.
Then, for each subsystem, a symbolic abstraction  can be computed and a local controller is synthesized while assuming that the other subsystems meet their local specifications.
This approach, called \emph{assume-guarantee} reasoning~\cite{henzinger1998you}, enables the use of symbolic control techniques for higher dimensional systems.

In this paper, we develop a novel compositional approach for symbolic control synthesis for a general class of {discrete time} nonlinear systems.
Our approach clearly differs from the previously mentioned works (and particularly from our previous work~\cite{meyer2015adhs}) by the possibility for subsystems to share common state variables through the definition for each subsystem of locally modeled but uncontrolled variables, which are accessible to the local controller.
Hence, this makes it possible for local controllers to share information on some of the states of the system.
In this setting, we develop compositional approaches for computing symbolic abstractions and synthesizing controllers that maintain the state of the system in some specified safe set.

The paper is organized as follows.
Section~\ref{sec preliminaries} introduces the class of systems, safety controllers and the abstraction framework considered in the paper.
Section~\ref{sec compositional} presents a compositional approach for computing abstractions from symbolic subsystems with overlapping sets of states.
Compositional controller synthesis is addressed in Section~\ref{sec synthesis}.  
Section~\ref{sec comparison} provides results to compare abstractions and controllers obtained from different system decompositions, and a discussion on the computational complexity of the approach. Numerical experiments are then reported in Section~\ref{sec simulation}.

%%%%%%%%%%%%%%%%%%%%%%%%%%%%%%%%%
%%%%%%%%%%%%%%%%%%%%%%%%%%%%%%%%%
\section{Preliminaries}
\label{sec preliminaries}

\subsection{System description}
\label{sub preli cooperative}

We consider a class of discrete time nonlinear control systems modeled by the difference inclusion:
\begin{equation}
x(t+1)\in F(x(t),u(t)),\; t\in \N
\label{eq system}
\end{equation}
where $\N=\{0,1,2,\dots\}$, $x(t)\in\mathbb{R}^n$, $u(t)\in \mathcal U\subseteq\mathbb{R}^p$ denote the state and the control input, respectively, and the set-valued map $F:\R^n\times \mathcal U \rightarrow 2^{\R^n}$. 
System (\ref{eq system}) is discrete time; however, it encompasses sampled versions
of continuous time systems, possibly subject to disturbances  (see e.g.~\cite{meyer2015adhs,reissig2016}).

Throughout the paper, we assume, for simplicity, that for all $x\in \mathbb{R}^n$, $u\in \mathcal U$, $F(x,u)\ne \emptyset$.
For a subset of states $\mathcal X'\subseteq \R^n$  and inputs $\mathcal U' \subseteq \mathcal U$ we denote
$$
F(\mathcal X',\mathcal U') = \bigcup_{x\in\mathcal{X'},u\in\mathcal{U'}}F(x,u).
$$
Exact computation of $F(\mathcal X',\mathcal U')$ may not always be possible, especially when (\ref{eq system}) corresponds to the sampled dynamics of a continuous time system.
Therefore, we will assume throughout the paper that we are able to compute, for all sets of states $\mathcal X'\subseteq \R^n$  and of inputs $\mathcal U' \subseteq \mathcal U$,
a set $\overline{F}(\mathcal X',\mathcal U')$ verifying
\begin{equation}
F(\mathcal X',\mathcal U') \subseteq \overline{F}(\mathcal X',\mathcal U'). 
\label{eq over reachable set centralized}
\end{equation} 
Several methods exist for computing such over-approximations for linear~\cite{girard2005reachability,kurzhanskiy2007ellipsoidal,le2010reachability} and nonlinear~\cite{sassi2012reachability,althoff2014reachability,coogan2015mixed,reissig2016} systems.

%%%%%%%%%%%%%%%%%%%%%%%%%%%%%%%%%
\subsection{Transition systems and safety controllers}
\label{sub preli alternating}
A {\it transition system} is defined as a triple $S=(X,U,\delta)$ consisting of:
\begin{itemize}
\item a set of states $X$;
\item a set of inputs $U$;
\item a transition map $\delta : X\times U \rightarrow 2^X$.
\end{itemize}
A transition $x'\in \delta(x,u)$ means that $S$ can evolve from state $x$ to state $x'$ under input $u$.
$U(x)$ denotes the set of enabled inputs at state $x$: i.e. $u\in U(x)$ if and only if $\delta(x,u)\neq\emptyset$.
A trajectory of $S$ is a finite or infinite sequence of transitions $(x^0,u^0,x^1,u^1,\dots)$ such that $x^{t+1}\in \delta(x^t,u^t)$, for $t\in\N$

In the following, we consider a safety synthesis problem for transition system $S$: let $\mathcal X\subseteq X$ be a subset of safe states, a {\it safety controller} for system $S$ and safe set $\mathcal X$ is a map $C:X \rightarrow 2^U$
such that:
\begin{itemize}
\item for all $x\in X$, $C(x)\subseteq U(x)$;
\item its domain $dom(C)=\{x\in X |\; C(x)\ne \emptyset\}\subseteq \mathcal X$;
\item for all $x\in dom(C)$ and $u\in C(x)$, $\delta(x,u) \subseteq dom(C)$. 
\end{itemize}
Essentially, a safety controller makes it possible to generate infinite trajectories of $S$, $(x^0,u^0,x^1,u^1,\dots)$ such that $x^t \in \mathcal X$, for all $t\in \N$ as follows:
$x^0\in dom(C)$, $u^t\in C(x^t)$ and $x^{t+1}\in \delta(x^t,u^t)$, for all $t\in \N$.
It is known (see e.g.~\cite{tabuada2009symbolic}) that there exists a {\it maximal safety controller} $C^*$ for system $S$ and safe sate $\mathcal X$ such that 
for all safety controllers $C$, for all $x\in X$, it holds $C(x)\subseteq C^*(x)$.

\subsection{Feedback refinement relations}

Complex transition systems motivate the use of abstractions, since finding a control strategy for an abstraction is generally simpler than for the original system.
However, to derive a controller for the original system from that of the abstraction, the systems must satisfy a formal behavioral relationship such as alternating simulation~\cite{tabuada2009symbolic}.
In this paper, we will rely on the notion of feedback refinement relations~\cite{reissig2016}, which form a special case of alternating simulation relations:
\begin{defn}[Feedback refinement]
\label{def simulation}
Given two transition systems $S_a=(X_a,U_a,\delta_a)$ and $S_b=(X_b,U_b,\delta_b)$, with $U_b\subseteq U_a$,
a map $H:X_a\rightarrow X_b$ defines a feedback refinement relation from $S_a$ to $S_b$ if for all $(x_{a},x_{b})\in X_{a}\times X_{b}~\text{with}~x_{b}=H(x_{a})$:
\begin{itemize}
\item $U_b(x_b) \subseteq U_a(x_a)$;
\item for all $u\in U_b(x_b)$, $H(\delta_a(x_a,u))\subseteq \delta_b(x_b,u).$
\end{itemize}
We denote $S_a\preceq_{\mathcal{FR}}S_b$.
\end{defn}
In the previous definition, $S_a$ represents a complex concrete system while $S_b$ is a simpler abstraction. 
From Definition~\ref{def simulation}, it follows that all abstract inputs $u$ of $S_b$ can also be used in $S_a$ such that all concrete transitions in $S_a$ are matched by an abstract transition in $S_b$.
As a result, controllers synthesized using the abstraction $S_b$ can be interfaced with the map $H$ to obtain a controller for the concrete system $S_a$ (see~\cite{reissig2016}).
In particular, if $C_b:X_b\rightarrow 2^{U_b}$ is a safety controller for transition system $S_b$ and safe set $\mathcal X_b\subseteq X_b$, then 
$C_a:X_a\rightarrow 2^{U_a}$, given by $C_a(x_a)=C_b(H(x_a))$ for all $x_a\in X_a$, is a safety controller for transition system $S_a$ and safe set 
$\mathcal X_a=H^{-1}(\mathcal X_b) \subseteq X_a$.

%%%%%%%%%%%%%%%%%%%%%%%%%%%%%%%%%
%%%%%%%%%%%%%%%%%%%%%%%%%%%%%%%%%
\section{Compositional abstraction}
\label{sec compositional}

System (\ref{eq system}) can be described as a transition system $S=(X,U,\delta)$ where, $X=\R^n$, $U=\mathcal U$ and $\delta=F$;
let $\mathcal X \subseteq \R^n$ be a subset of states of interest. In this section, we present a compositional approach for computing symbolic abstractions of transition system $S$.

In order to allow for system decomposition, we will make the following assumption on the structure of the state and input sets $\mathcal X$ and $\mathcal U$:
\begin{assum} 
\label{assum set}
The following equalities hold:
\begin{IEEEeqnarray*}{ll}
\mathcal X= \mathcal X_1 \times \dots \times \mathcal X_{\bar n}, &  \text{ with } \mathcal X_i \subseteq \R^{n_i},\; i\in I= \{1,\dots,\bar n\};\\
\mathcal U= \mathcal U_1 \times \dots \times \mathcal U_{\bar p}, & \text{ with } \mathcal U_j \subseteq \R^{p_j},\; j\in J = \{1,\dots,\bar p\}.
\end{IEEEeqnarray*}
\end{assum}
States $x\in \R^n$ and inputs  $u\in \R^p$ can thus be seen as vectors of elementary components: $x=(x_1,\dots,x_{\bar n})$ with $x_i \in \R^{n_i}$ for $i\in I$, and
$u=(u_1,\dots,u_{\bar p})$ with $u_j \in \R^{p_j}$ for $j\in J$.

For $i\in I$, let $\mathcal P_i$ be a finite partition of the set $\mathcal X_i$, then let $\mathcal P$ be the finite partition of the safe set $\mathcal X$ obtained from the partitions $\mathcal P_i$ as follows:
\begin{equation*}
\mathcal P = \left\{ s_1 \times \dots \times s_{\bar n} |\; s_i \in \mathcal P_i, \; i\in I \right\}.
\end{equation*}
Similarly, for $j\in J$, let $\mathcal V_j$ be a finite subset of $\mathcal U_j$, then let $\mathcal V$ be the finite subset of $\mathcal U$ given by the Cartesian product of the sets $\mathcal V_j$:
\begin{equation*}
\mathcal V =   \mathcal V_1 \times \dots \times \mathcal V_{\bar p}.
\end{equation*}

%%%%%%%%%%%%%%%%%%%%%%%%%%%%%%%%%
\subsection{System decomposition}
\label{sub compo decomposition}

Let $m\in \N$, with $1\le m \le \min(\bar n, \bar p)$,  let $\Sigma= \{1,\dots,m\}$,
the symbolic abstraction of $S$ is obtained by composition of $m$ symbolic subsystems $S_\sigma$, $\sigma	\in \Sigma$.

In the following, we use two types of indices:
\begin{itemize}
\item Latin letters $i \in I$, $j\in J$, refer to $x_i$ and $u_j$ the components of the state and input $x$ and $u$ of system $S$. 

\item Greek letters $\sigma \in \Sigma$ refer to $S_\sigma$ the $\sigma$-th symbolic subsystem, $s_\sigma$ and $u_\sigma$ denote the state and input of system $S_\sigma$ respectively. 
\end{itemize}

We will use $\pi_i:\R^n \rightarrow \R^{n_i}$ and $\pi_j:\R^p \rightarrow \R^{p_j}$ to denote the projections over components $x_i$ and $u_j$, with $i\in I$, $j\in J$, respectively. 
For $\mathcal X' \subseteq \R^n$ and $\mathcal U' \subseteq \R^p$, we denote $\mathcal X'_i = \pi_i(\mathcal X')$ and  $\mathcal U'_j = \pi_j(\mathcal U')$. 
Similarly, for subset of indices $I'\subseteq I$, $J'\subseteq J$,
$\pi_{I'}: \R^n \rightarrow \prod_{i\in I'}  \R^{n_i}$
 and 
 $\pi_{J'}:\R^p \rightarrow \prod_{j\in J'}  \R^{p_j}$ 
 denote the projections over the set of components $\{x_i |\; i\in I'\}$ and $\{u_j |\; j\in J'\}$, respectively;
we use the notation  $x_{I'} = \pi_{I'}(x)$, $\mathcal X'_{I'} = \pi_{I'}(\mathcal X')$, $u_{J'}= \pi_{J'}(u)$ and $\mathcal U'_{J'} = \pi_{J'}(\mathcal U')$.

For $\sigma\in \Sigma$, subsystem $S_\sigma$ can be described using the following sets of indices:
\begin{itemize}
\item $I_\sigma^c \subseteq I$, with $I_\sigma^c\ne \emptyset$, denotes the state components to be controlled in $S_\sigma$, $(I_1^c,\dots,I_m^c)$ is a partition of the state indices $I$;
\item $I_\sigma \subseteq I$, with $I_\sigma^c\subseteq I_\sigma$, denotes the state components modeled in $S_\sigma$;
\item $I_\sigma^o \subseteq I$, with $I_\sigma^o=I_\sigma \backslash I_\sigma^c$, denotes the state components that are modeled but not controlled in $S_\sigma$;
\item $I_\sigma^u \subseteq I$, with $I_\sigma^u= I \backslash I_\sigma$, denotes the remaining state components that are unmodeled in $S_\sigma$;
\item $J_\sigma \subseteq J$, with $J_\sigma \ne \emptyset$, denotes the control input components modeled in $S_\sigma$, $(J_1,\dots,J_m)$ is a partition of the control input indices $J$;
\item $J_\sigma^u \subseteq J$ with $J_\sigma^u=J \backslash J_\sigma$, denotes the remaining control input components that are unmodeled in $S_\sigma$.
\end{itemize}

It is important to note that the subsystems may share common modeled state components (i.e. the sets of indices $I_\sigma$ may overlap), though the sets of controlled state components $I_\sigma^c$ and modeled control input components $J_\sigma$ are necessarily disjoints. 
Intuitively, $S_\sigma$ will be used to control state components $I_\sigma^c$ using input components $J_\sigma$; other state components $I_\sigma^o\cup I_\sigma^u$ will be controlled in other subsystems using input components $J_\sigma^u$. Though state components $I_\sigma^o$ will be controlled in other subsystems, they are modeled in $S_\sigma$ and thus information on their dynamics is available for the control of $S_\sigma$.

Let us remark that the sets of indices $I_\sigma^o$ and $I_\sigma^u$ may possibly be empty if $I_\sigma=I_\sigma^c$ and $I_\sigma=I$, respectively. 
If $m=1$, there is only one subsystem and we encompass the usual centralized abstraction approach~(see e.g.~\cite{tabuada2009symbolic,zamani2012symbolic,coogan2015mixed,reissig2016}).

\begin{remark}
In theory, the choice of the sets of indices can be made arbitrarily. However, if the considered system has some structure, i.e. if it consists of interconnected components, a natural decomposition is to associate to each component $\mathcal C$ one subsystem $S_\sigma$ where: the controlled states $I_\sigma^c$ and the modeled control input $J_\sigma$
are the states and control inputs of 
$\mathcal C$ and the modeled but uncontrolled states $I_\sigma^o$ are the states of other components that have the strongest interactions with $\mathcal C$.
\end{remark}
%%%%%%%%%%%%%%%%%%%%%%%%%%%%%%%%%
\subsection{Symbolic subsystems}
\label{sub compo abstraction}

Let $\sigma \in \Sigma$, the symbolic subsystem $S_\sigma$ is an abstraction of $S$, which models only state and input components $x_{I_\sigma}$ and $u_{J_\sigma}$ respectively.
Formally, subsystem $S_\sigma$ is defined as a transition system $S_\sigma=(X_\sigma,U_\sigma,\delta_{\sigma})$ where:
\begin{itemize}
\item the set of states $X_\sigma$
is a finite partition of $\pi_{I_\sigma}(\R^n)$, given by $X_\sigma=X_\sigma^0\cup\{Out_\sigma\}$ where  $Out_\sigma=\pi_{I_\sigma}(\R^n)\setminus \mathcal X_{I_\sigma}$ and
$$
X_\sigma^0= \left\{ \prod_{i\in I_\sigma} s_i \Big|\; s_i \in \mathcal P_i, \; i\in I_\sigma \right\}
$$
 is a finite partition of $\mathcal X_{I_\sigma}$;
\item the set of inputs $U_\sigma$ is a finite subset of $\mathcal U_{J_\sigma}$ given by
$$U_\sigma=\prod_{j\in J_\sigma}  \mathcal V_j.$$ 
\end{itemize}

To define the transition relation of $S_\sigma$, let us first define the following map:
given $s_\sigma \in X_\sigma^0$ and $u_\sigma\in U_\sigma$, we define the set ${\Phi}_\sigma(s_\sigma,u_\sigma)\subseteq \R^n$ as follows:
\begin{equation}
{\Phi}_\sigma(s_\sigma,u_\sigma)= \overline{F}(\mathcal X\cap \pi^{-1}_{I_\sigma}(s_\sigma),\mathcal U\cap \pi^{-1}_{J_\sigma}(\{u_\sigma\})) .
\label{eq reachable set Si}
\end{equation}
The set ${\Phi}_\sigma(s_\sigma,u_\sigma)$ is therefore an over-approximation of successors of states $x\in \mathcal X$ with $\pi_{I_\sigma}(x)\in s_\sigma$, for control inputs $u\in \mathcal U$ with $\pi_{J_\sigma}(u)=u_\sigma$.
Then, we define the transition relation of $S_\sigma$ as follows:
\begin{itemize}
\item for all $s_\sigma \in X_\sigma^0,~u_\sigma \in U_\sigma,~s_\sigma'\in X_\sigma^0$,
\begin{equation}
\label{eq trans1a}
s_{\sigma}' \in \delta_\sigma(s_{\sigma},u_{\sigma}) \iff s_\sigma'\cap\pi_{I_\sigma}({\Phi}_\sigma(s_\sigma,u_\sigma))\neq\emptyset;
\end{equation}
\item for all $s_\sigma \in X_\sigma^0,~u_\sigma\in U_\sigma$,
\begin{equation}
\label{eq trans1b}
\hspace{-0.3cm}
Out_\sigma\in \delta_\sigma(s_{\sigma},u_{\sigma}) \iff 
\left\{
\begin{array}{l}
\pi_{I_\sigma}({\Phi}_\sigma(s_\sigma,u_\sigma))\cap \mathcal X_{I_\sigma}=\emptyset\\
 \text{or } 
\pi_{I_\sigma^c}({\Phi}_\sigma(s_\sigma,u_\sigma)) \nsubseteq \mathcal X_{I_\sigma^c}.
\end{array} \right.
\end{equation}
\end{itemize}

\begin{remark}
\label{remark:subsystem1}
The first condition in (\ref{eq trans1b}) holds if and only if there does not exist any transition defined by (\ref{eq trans1a}), because $X_\sigma^0$ is a partition of $\mathcal X_{I_\sigma}$. As a consequence, it follows that for all $s_\sigma \in X_\sigma^0,~u_\sigma \in U_\sigma$, $\delta_\sigma(s_\sigma,u_\sigma)\ne \emptyset$ and thus $U_\sigma(s_\sigma)=U_\sigma$.
\end{remark}

\ifdouble
  \begin{figure}[b]
  \centering
  \includegraphics[width=0.9\columnwidth]{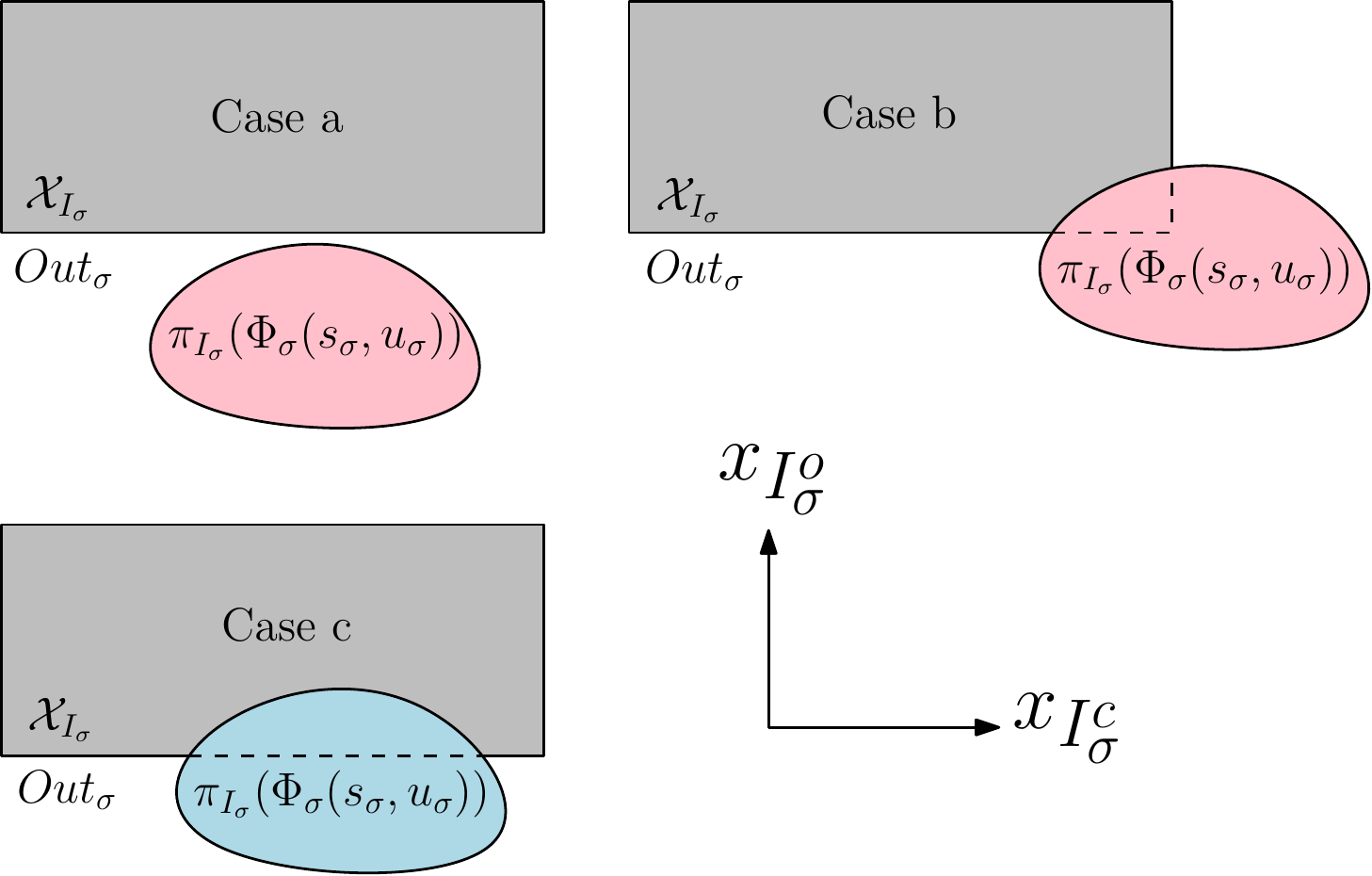}
  \caption{Illustration of (\ref{eq trans1b}): a transition towards $Out_\sigma$ is created in cases a and b, but not in case c.}
  \label{fig ag2}
  \end{figure}
\else
  \begin{figure}[h]
  \centering
  \includegraphics[width=0.6\textwidth]{AG2_3graphs2}
  \caption{Illustration of (\ref{eq trans1b}): a transition towards $Out_\sigma$ is created in cases a and b, but not in case c.}
  \label{fig ag2}
  \end{figure}
\fi

\begin{remark}
\label{remark:subsystem2}
According to (\ref{eq trans1b}), a transition to $Out_\sigma$ exists if   $\pi_{I_\sigma}({\Phi}_\sigma(s_\sigma,u_\sigma))$ is entirely outside 
 $\mathcal X_{I_\sigma}$ (first condition and Figure~\ref{fig ag2}.a); or
if  $\pi_{I_\sigma^c}({\Phi}_\sigma(s_\sigma,u_\sigma))$ is not contained in $\mathcal X_{I_\sigma^c}$  (second condition and Figure~\ref{fig ag2}.b).
It should be noted that in the case where the reachable set  $\pi_{I_\sigma^c}({\Phi}_\sigma(s_\sigma,u_\sigma))$ is contained in $\mathcal X_{I_\sigma^c}$
but $\pi_{I_\sigma^o}({\Phi}_\sigma(s_\sigma,u_\sigma))$ is not contained in $\mathcal X_{I_\sigma^o}$ as in Figure~\ref{fig ag2}.c, no transition is created towards $Out_\sigma$.
Finally, if $I_\sigma=I_\sigma^c$, (\ref{eq trans1b}) becomes equivalent to
$$
Out_\sigma\in \delta_\sigma(s_{\sigma},u_{\sigma}) \iff \pi_{I_\sigma}({\Phi}_\sigma(s_\sigma,u_\sigma)) \nsubseteq \mathcal X_{I_\sigma},
$$
which is the condition used in~\cite{meyer2015adhs}, for compositional abstractions where the set of modeled state components $I_\sigma$ do not overlap
(i.e. $I_\sigma=I^c_\sigma$, for all $\sigma\in \Sigma$).
\end{remark}

%%%%%%%%%%%%%%%%%%%%%%%%%%%%%%%%%
\subsection{Composition}
\label{sub compo alternating}

In this section, we show how the previous subsystems $S_\sigma$, with $\sigma \in \Sigma$, can be composed in order to define a symbolic abstraction $S_c$ of the original system $S$.
The main result of the section is Theorem~\ref{th simulation Sc}, which shows that there exists  a feedback refinement relation from $S$ to $S_c$.

The composition of the subsystems $S_\sigma$, $\sigma \in \Sigma$, is given by the transition system $S_c=(X_c,U_c,\delta_c)$ where:
\begin{itemize}
\item the set of states $X_c$
is a finite partition of $\R^n$, given by $X_c=X_c^0\cup\{Out\}$ where  $Out=\R^n\setminus \mathcal X$ and
$X_c^0=\mathcal{P}$ is a finite partition of $\mathcal X$;
\item the set of inputs $U_c=\mathcal V$ is a finite subset of $\mathcal U$.
\end{itemize}

Let us remark that by definition of $X_c^0$ and $X_\sigma^0$, we have that for all $s\in X_c^0$, its projection $s_{I_\sigma} \in  X_\sigma^0$.
Similarly, for all $u \in U_c$, its projection $u_{J_\sigma} \in  U_\sigma$.
The transition relation of $S_c$ can therefore be defined as follows:
\begin{itemize}
\item for all $s\in X_c^0,~u\in U_c,~s'\in X_c^0$,
\begin{equation}
\label{eq trans2a}
s'\in \delta_c(s,u) \Longleftrightarrow\forall \sigma \in \Sigma,~s'_{I_\sigma} \in\delta_\sigma(s_{I_\sigma},u_{J_\sigma});
\end{equation}

\item for all $s\in X_c^0,~u\in U_c$,
\begin{equation}
\label{eq trans2b}
Out \in \delta_c(s,u) \Longleftrightarrow \exists \sigma \in \Sigma,\; Out_\sigma \in\delta_\sigma(s_{I_\sigma},u_{J_\sigma}).
\end{equation}
\end{itemize}

\begin{remark}
Because the sets of modeled state components $I_\sigma$ are allowed to overlap, the transition relation of $S_c$ cannot simply be obtained as the Cartesian product of the transition relations of the subsystems $S_\sigma$, as in~\cite{meyer2015adhs}.
Indeed, for $s\in X_c^0,~u\in U_c$, it is possible that for all $\sigma \in \Sigma$, there exists $s'_\sigma \in X_\sigma^0$, such that $s_{\sigma}' \in \delta_\sigma(s_{I_\sigma},u_{J_\sigma})$.
However, a transition to $X_c^0$ will exist in $S_c$ if and only if there exists $s'\in  X_c^0$ such that $s_{I_\sigma}'=s'_\sigma$, for all $\sigma \in \Sigma$. 
\end{remark}

In view of the previous remark, it is legitimate to ask if the composition of the subsystems can lead to couples of states and inputs $(s,u)\in X_c^0 \times U_c$ without a successor. The following proposition shows that this is not the case:
\begin{prop}
\label{prop input composed}
Under Assumption~\ref{assum set}, for all $s\in X_c^0$ we have $U_c(s)=U_c$, i.e.\ $\delta_c(s,u)\neq\emptyset$, for all $u\in U_c$.
\end{prop}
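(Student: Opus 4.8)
The plan is to fix an arbitrary $s\in X_c^0$ and $u\in U_c$ and produce a successor, splitting into two cases according to (\ref{eq trans2b}). If there exists some $\sigma\in\Sigma$ with $Out_\sigma\in\delta_\sigma(s_{I_\sigma},u_{J_\sigma})$, then (\ref{eq trans2b}) immediately gives $Out\in\delta_c(s,u)$ and we are done. The real work lies in the complementary case, where for every $\sigma\in\Sigma$ no transition to $Out_\sigma$ exists; by (\ref{eq trans1b}) this means that for every $\sigma$ both $\pi_{I_\sigma}({\Phi}_\sigma(s_{I_\sigma},u_{J_\sigma}))\cap\mathcal X_{I_\sigma}\neq\emptyset$ and, crucially, $\pi_{I_\sigma^c}({\Phi}_\sigma(s_{I_\sigma},u_{J_\sigma}))\subseteq\mathcal X_{I_\sigma^c}$ hold.

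The key idea is to exhibit a single point of $\R^n$ lying simultaneously in every over-approximation ${\Phi}_\sigma$. Since $s$ is a non-empty partition element of $\mathcal X$, I pick any $x\in s$; because $s=s_1\times\cdots\times s_{\bar n}\subseteq\mathcal X$ and $u\in\mathcal V\subseteq\mathcal U$, the standing non-emptiness assumption on $F$ lets me choose some $y\in F(x,u)$. For each $\sigma$ one checks that $x\in\mathcal X\cap\pi_{I_\sigma}^{-1}(s_{I_\sigma})$ (as $x_i\in s_i$ for $i\in I_\sigma$) and $u\in\mathcal U\cap\pi_{J_\sigma}^{-1}(\{u_{J_\sigma}\})$, so by the definition of $F$ on sets, the over-approximation property (\ref{eq over reachable set centralized}), and (\ref{eq reachable set Si}), we get $y\in F(x,u)\subseteq{\Phi}_\sigma(s_{I_\sigma},u_{J_\sigma})$ for \emph{every} $\sigma\in\Sigma$ at once.

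Next I would use the partition structure of the controlled indices to conclude $y\in\mathcal X$. For each $\sigma$, $y\in{\Phi}_\sigma$ yields $\pi_{I_\sigma^c}(y)\in\pi_{I_\sigma^c}({\Phi}_\sigma)\subseteq\mathcal X_{I_\sigma^c}$, hence $y_i\in\mathcal X_i$ for every $i\in I_\sigma^c$. Because $(I_1^c,\dots,I_m^c)$ is by construction a partition of $I$, every index $i\in I$ lies in exactly one $I_\sigma^c$, so $y_i\in\mathcal X_i$ for all $i$, and Assumption~\ref{assum set} then gives $y\in\mathcal X_1\times\cdots\times\mathcal X_{\bar n}=\mathcal X$.

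Finally, since $\mathcal P=X_c^0$ is a partition of $\mathcal X$, there is a unique $s'=s'_1\times\cdots\times s'_{\bar n}\in X_c^0$ with $y\in s'$. For each $\sigma$, $\pi_{I_\sigma}(y)$ belongs both to $s'_{I_\sigma}$ and to $\pi_{I_\sigma}({\Phi}_\sigma)$, so $s'_{I_\sigma}\cap\pi_{I_\sigma}({\Phi}_\sigma)\neq\emptyset$, which by (\ref{eq trans1a}) gives $s'_{I_\sigma}\in\delta_\sigma(s_{I_\sigma},u_{J_\sigma})$; as this holds for all $\sigma$, (\ref{eq trans2a}) yields $s'\in\delta_c(s,u)$ and thus $\delta_c(s,u)\neq\emptyset$. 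I expect the main obstacle to be the second step: recognizing that a single successor $y$ of a single concrete pair $(x,u)$, rather than separate per-subsystem witnesses, is exactly what makes the overlapping projections mutually consistent and forces $y\in\mathcal X$ through the controlled-index partition; the surrounding projection bookkeeping is then routine.
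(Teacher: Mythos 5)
Your proof is correct and follows essentially the same route as the paper's: the same case split on whether some subsystem has a transition to $Out_\sigma$, the same use of the second condition of (\ref{eq trans1b}) to get $\pi_{I_\sigma^c}({\Phi}_\sigma(s_{I_\sigma},u_{J_\sigma}))\subseteq\mathcal X_{I_\sigma^c}$, the inclusion of concrete successors in every ${\Phi}_\sigma$, the controlled-index partition to force safety, and the same construction of $s'$ via (\ref{eq trans1a}) and (\ref{eq trans2a}). The only (immaterial) difference is that you track a single successor $y\in F(x,u)$ where the paper works with the whole set $F(s,\{u\})$, which also makes explicit the use of the non-emptiness assumption on $F$ that the paper leaves implicit.
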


\begin{proof}
Let $s\in X_c^0$ and $u\in U_c$. Then for all $\sigma \in \Sigma$, $s_{I_\sigma}\in X_\sigma^0$, $u_{J_\sigma}\in U_\sigma$ and by construction, $\delta_\sigma(s_{I_\sigma},u_{J_\sigma})\neq\emptyset$ (see Remark~\ref{remark:subsystem1}).
If there exists a subsystem $S_\sigma$ such that $Out_\sigma \in \delta_\sigma(s_{I_\sigma},u_{J_\sigma})$, then by definition of $S_c$ we have $Out\in \delta_c(s,u)$.
Otherwise, we have that $Out_\sigma \notin \delta_\sigma(s_{I_\sigma},u_{J_\sigma})$ for all $\sigma \in \Sigma$,
which from the second condition of (\ref{eq trans1b}) implies that 
\begin{equation}
\label{eq p1a}
\forall \sigma\in \Sigma,\; \pi_{I^c_\sigma}({\Phi}_\sigma(s_{I_\sigma},u_{J_\sigma}))\subseteq \mathcal X_{I_\sigma^c}. 
\end{equation}
Remarking that $s\subseteq \mathcal X\cap \pi^{-1}_{I_\sigma}(s_{I_\sigma})$ and $\{u\}\subseteq \mathcal U\cap \pi^{-1}_{J_\sigma}(\{u_{J_\sigma}\})$,
the following inclusion follows from (\ref{eq over reachable set centralized}) and (\ref{eq reachable set Si}):
\ifdouble
  \begin{eqnarray}
  \nonumber
  F(s,\{u\}) &\subseteq & F(\mathcal X\cap \pi^{-1}_{I_\sigma}(s_{I_\sigma}),\mathcal U\cap \pi^{-1}_{J_\sigma}(\{u_{J_\sigma}\}))\\
  \label{eq p1b}
  &\subseteq & {\Phi}_\sigma(s_{I_\sigma},u_{J_\sigma}).
  \end{eqnarray}
\else
  \begin{equation}
  \label{eq p1b}
  F(s,\{u\}) \subseteq  F(\mathcal X\cap \pi^{-1}_{I_\sigma}(s_{I_\sigma}),\mathcal U\cap \pi^{-1}_{J_\sigma}(\{u_{J_\sigma}\}))  \subseteq  {\Phi}_\sigma(s_{I_\sigma},u_{J_\sigma}).
  \end{equation}
\fi
Therefore, from (\ref{eq p1a}) and (\ref{eq p1b}) it follows
$$
\forall \sigma\in \Sigma,\; \pi_{I^c_\sigma}(F(s,\{u\}) )\subseteq \mathcal X_{I_\sigma^c}. 
$$
This, together with Assumption~\ref{assum set} and the fact that $(I_1^c,\dots,I_m^c)$ is a partition of $I$, implies that $F(s,\{u\}) \subseteq \mathcal X$. 
Since $X_c^0$ is a partition of $\mathcal X$, there exists $s'\in X_c^0$ such that $s'\cap F(s,\{u\})\neq\emptyset$. 
 Then, (\ref{eq p1b}) gives $s'\cap {\Phi}_\sigma(s_{I_\sigma},u_{J_\sigma}) \neq\emptyset$, for all $\sigma\in \Sigma$.
 Thus, for all $\sigma\in \Sigma$, $s'_{I_\sigma} \cap \pi_{I_\sigma}({\Phi}_\sigma(s_{I_\sigma},u_{J_\sigma})) \neq \emptyset$. 
It follows from (\ref{eq trans1a}) that $s_{I_\sigma}'\in \delta_\sigma(s_{I_\sigma},u_{J_\sigma})$ for all $\sigma\in \Sigma$, which gives, by (\ref{eq trans2a}), $s'\in \delta_c(s,u)$.
\end{proof}

We can now state the main result of the section:
\begin{thm}
\label{th simulation Sc}
Let  the map $H:X \rightarrow X_c$ be given by
$H(x)=s$ if and only if $x\in s$. Then,
under Assumption~\ref{assum set}, $H$
defines a feedback refinement relation from $S$ to $S_c$: $S\preceq_{\mathcal{FR}}S_c$.
\end{thm}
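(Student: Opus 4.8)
The plan is to verify directly the two conditions of Definition~\ref{def simulation} with $S_a=S$ and $S_b=S_c$. The prerequisite $U_c\subseteq U$ is immediate, since $U_c=\mathcal V$ is a finite subset of $\mathcal U=U$. I then fix $x\in X=\R^n$, set $s=H(x)$, and distinguish whether $x\in\mathcal X$ or not. If $x\notin\mathcal X$, then $s=Out$; since $\delta_c$ is defined only on $X_c^0$ through (\ref{eq trans2a})--(\ref{eq trans2b}), the state $Out$ enables no input, so $U_c(Out)=\emptyset\subseteq U(x)$ and the second condition holds vacuously.

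The substantive case is $x\in\mathcal X$, where $s\in X_c^0$ is the unique cell of $\mathcal P$ containing $x$. For the first condition, Proposition~\ref{prop input composed} gives $U_c(s)=U_c=\mathcal V$, while the standing assumption $F(x,u)\ne\emptyset$ gives $U(x)=\mathcal U\supseteq\mathcal V$, so $U_c(s)\subseteq U(x)$. For the second condition I fix $u\in U_c(s)$ and an arbitrary successor $x'\in F(x,u)$, and must establish $H(x')\in\delta_c(s,u)$. The workhorse, following the same reasoning as in (\ref{eq p1b}), is the inclusion $F(x,u)\subseteq{\Phi}_\sigma(s_{I_\sigma},u_{J_\sigma})$ holding for every $\sigma\in\Sigma$; it follows from $x\in\mathcal X\cap\pi^{-1}_{I_\sigma}(s_{I_\sigma})$ and $u\in\mathcal U\cap\pi^{-1}_{J_\sigma}(\{u_{J_\sigma}\})$, together with (\ref{eq over reachable set centralized}) and (\ref{eq reachable set Si}). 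In particular $x'\in{\Phi}_\sigma(s_{I_\sigma},u_{J_\sigma})$ for all $\sigma$.

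I then split on whether $x'\in\mathcal X$. If $x'\in\mathcal X$, let $s'=H(x')$ be its cell; for each $\sigma$ the projection $\pi_{I_\sigma}(x')$ lies both in $s'_{I_\sigma}$ and in $\pi_{I_\sigma}({\Phi}_\sigma(s_{I_\sigma},u_{J_\sigma}))$, so their intersection is nonempty and (\ref{eq trans1a}) gives $s'_{I_\sigma}\in\delta_\sigma(s_{I_\sigma},u_{J_\sigma})$. As this holds for every $\sigma$, (\ref{eq trans2a}) yields $s'\in\delta_c(s,u)$. If instead $x'\notin\mathcal X$, then $H(x')=Out$ and there is a component $i\in I$ with $\pi_i(x')\notin\mathcal X_i$. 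Since $(I_1^c,\dots,I_m^c)$ is a partition of $I$, this $i$ belongs to $I_\sigma^c$ for exactly one $\sigma$, whence $\pi_{I_\sigma^c}(x')\notin\mathcal X_{I_\sigma^c}$ and therefore $\pi_{I_\sigma^c}({\Phi}_\sigma(s_{I_\sigma},u_{J_\sigma}))\nsubseteq\mathcal X_{I_\sigma^c}$; the second condition of (\ref{eq trans1b}) then gives $Out_\sigma\in\delta_\sigma(s_{I_\sigma},u_{J_\sigma})$, and (\ref{eq trans2b}) gives $Out\in\delta_c(s,u)$.

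The step I expect to be most delicate is the sub-case $x'\notin\mathcal X$: I must ensure that an escape from the safe set is detected by \emph{some} subsystem and is not masked by the overlaps between the modeled index sets $I_\sigma$. This is precisely where the partition property of the controlled indices $I_\sigma^c$ is indispensable---the offending coordinate is controlled by a unique subsystem whose transition to $Out_\sigma$ fires---while the fact that the second condition of (\ref{eq trans1b}) is phrased in terms of $I_\sigma^c$ rather than $I_\sigma$ guarantees that merely observed, overlapping coordinates in $I_\sigma^o$ (cf.\ Figure~\ref{fig ag2}.c) do not create spurious transitions to $Out_\sigma$ without a counterpart in $S$.
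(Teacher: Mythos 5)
Your proof is correct and follows essentially the same route as the paper's: the same inclusion $F(x,u)\subseteq{\Phi}_\sigma(s_{I_\sigma},u_{J_\sigma})$ (the paper's (\ref{eq p1b})), the same case split on $x'\in\mathcal X$ versus $x'\notin\mathcal X$, and the same use of Assumption~\ref{assum set} together with the partition property of $(I_1^c,\dots,I_m^c)$ to route an escaping coordinate to the unique subsystem that controls it. The only differences are cosmetic (you work pointwise with $x'$ and exhibit the offending index $i$ explicitly, where the paper argues with $F(s,\{u\})$ at the level of sets).
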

\begin{proof}
Let $s\in X_c^0$, $x\in s$, $u\in U_c(s)=U_c\subseteq U=U(x)$, $x'\in \delta(x,u)=F(x,u)$ and $s'=H(x')$.
Since $x\in s$, we have $x'\in F(s,\{u\})$. Then, let us consider the two possible cases:
\begin{itemize}
\item $x'\in \mathcal X$ -- We have by (\ref{eq p1b}), $x'\in {\Phi}_\sigma(s_{I_\sigma},u_{J_\sigma})$, for all $\sigma\in \Sigma$.
Since $x'\in \mathcal X$, then $s'\in X_c^0$, it follows from $x'\in s'$ that $s'\cap  {\Phi}_\sigma(s_{I_\sigma},u_{J_\sigma})\neq \emptyset$, for all $\sigma\in \Sigma$.
 Then, for all $\sigma \in \Sigma$, $s'_{I_\sigma}\in X_\sigma^0$ and 
$s_{I_\sigma}'\cap  \pi_{I_\sigma} ({\Phi}_\sigma(s_{I_\sigma},u_{J_\sigma}))\neq \emptyset$.
From (\ref{eq trans1a}),
$s_{I_\sigma}'\in \delta_\sigma(s_{I_\sigma},u_{J_\sigma})$, for all $\sigma\in \Sigma$ and by (\ref{eq trans2a}) we have $s'\in \delta_c(s,u)$.

\item $x'\notin \mathcal X$ -- Then, $F(s,\{u\}) \not\subseteq \mathcal X$. Then, from Assumption~\ref{assum set} and the fact that $(I_1^c,\dots,I_m^c)$ is a partition of $I$, it follows that there exists $\sigma \in \Sigma$ such that 
$\pi_{I_\sigma^c}(F(s,\{u\})) \not\subseteq \mathcal X_{I_\sigma^c}$. From (\ref{eq p1b}), we have $\pi_{I_\sigma^c}(
{\Phi}_\sigma(s_{I_\sigma},u_{J_\sigma})) \not\subseteq \mathcal X_{I_\sigma^c}$. Then, from  (\ref{eq trans1b}), $Out_\sigma \in \delta_\sigma(s_{I_\sigma},u_{J_\sigma})$, and from
(\ref{eq trans2b}), $Out \in \delta_c(s,u)$. Since $x'\notin \mathcal X$, $s'=Out$.
\end{itemize}
The case $s=Out$ trivially satisfies Definition~\ref{def simulation} since $U_c(Out)=\emptyset$ by definition of $S_c$.
\end{proof}

Note that the composed abstraction $S_c$ is only created in this section to prove the feedback refinement relationship but one should avoid computing it in practice since it would defeat the purpose of the compositional approach. 
We end the section by stating an instrumental result, which will be used in Section~\ref{sec comparison} when comparing abstractions obtained from different system decompositions.

\begin{lemma}
\label{prop ag2}
Under Assumption~\ref{assum set}, for all $s\in X_c^0$ and $u\in U_c$, $Out \in \delta_c(s,u)$ if and only if there exists $\sigma \in \Sigma$ such that $\pi_{I_\sigma^c}(\Phi_\sigma(s_{I_\sigma},u_{J_\sigma}))\not\subseteq \mathcal X_{I_\sigma^c}$.
\end{lemma}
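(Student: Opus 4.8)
The plan is to prove the two implications of the equivalence separately, treating the reverse direction as immediate and reserving the real work for the forward direction, which I would handle by contraposition.

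For the reverse (\emph{if}) direction, suppose some $\sigma \in \Sigma$ satisfies $\pi_{I_\sigma^c}(\Phi_\sigma(s_{I_\sigma},u_{J_\sigma})) \not\subseteq \mathcal X_{I_\sigma^c}$. This is exactly the second condition in the disjunction of (\ref{eq trans1b}), so it immediately yields $Out_\sigma \in \delta_\sigma(s_{I_\sigma},u_{J_\sigma})$, and then (\ref{eq trans2b}) gives $Out \in \delta_c(s,u)$. No structural assumptions are needed here.

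For the forward (\emph{only if}) direction, I would argue the contrapositive: assuming $\pi_{I_\sigma^c}(\Phi_\sigma(s_{I_\sigma},u_{J_\sigma})) \subseteq \mathcal X_{I_\sigma^c}$ for every $\sigma \in \Sigma$, I want to show $Out \notin \delta_c(s,u)$. By (\ref{eq trans2b}) this amounts to showing $Out_\sigma \notin \delta_\sigma(s_{I_\sigma},u_{J_\sigma})$ for all $\sigma$, i.e.\ that neither disjunct of (\ref{eq trans1b}) holds. The second disjunct is ruled out directly by the standing assumption. The crux is therefore to rule out the first disjunct, that is, to establish $\pi_{I_\sigma}(\Phi_\sigma(s_{I_\sigma},u_{J_\sigma})) \cap \mathcal X_{I_\sigma} \neq \emptyset$ for every $\sigma$. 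This is exactly where the reasoning of Proposition~\ref{prop input composed} is reused: the standing assumption is precisely (\ref{eq p1a}), and combined with the inclusion (\ref{eq p1b}) it gives $\pi_{I_\sigma^c}(F(s,\{u\})) \subseteq \mathcal X_{I_\sigma^c}$ for all $\sigma$; since $(I_1^c,\dots,I_m^c)$ partitions $I$ and $\mathcal X$ is a product set by Assumption~\ref{assum set}, this forces $F(s,\{u\}) \subseteq \mathcal X$.

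I would then finish by choosing any $x' \in F(s,\{u\})$, which is nonempty since $F$ is nonempty-valued, so that $x' \in \mathcal X$ and, by (\ref{eq p1b}), $x' \in \Phi_\sigma(s_{I_\sigma},u_{J_\sigma})$ for every $\sigma$. Projecting, $\pi_{I_\sigma}(x')$ lies in both $\mathcal X_{I_\sigma}$ and $\pi_{I_\sigma}(\Phi_\sigma(s_{I_\sigma},u_{J_\sigma}))$, so their intersection is nonempty for each $\sigma$; the first disjunct of (\ref{eq trans1b}) thus fails everywhere, completing the contrapositive. The only genuinely delicate step, and the one I would expect to be the main obstacle, is the passage from the per-subsystem controlled-component containments to the global inclusion $F(s,\{u\}) \subseteq \mathcal X$, which relies essentially on $(I_1^c,\dots,I_m^c)$ being a partition so that the controlled components collectively cover all coordinates. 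This is the same mechanism that drives Proposition~\ref{prop input composed}, and recognizing that the present lemma is a refinement of that argument is what keeps the proof short.
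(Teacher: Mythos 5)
Your proof is correct, and its necessity direction takes a genuinely different route from the paper's. You argue by contraposition: assuming $\pi_{I_\sigma^c}(\Phi_\sigma(s_{I_\sigma},u_{J_\sigma}))\subseteq \mathcal X_{I_\sigma^c}$ for every $\sigma$ (which is exactly (\ref{eq p1a})), you rerun the mechanism of Proposition~\ref{prop input composed} to get $F(s,\{u\})\subseteq \mathcal X$, and then a single witness point $x'\in F(s,\{u\})$ (nonempty by the standing assumption on $F$ and the nonemptiness of partition cells) simultaneously rules out the first disjunct of (\ref{eq trans1b}) for \emph{all} subsystems, so that (\ref{eq trans2b}) excludes $Out$. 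The paper instead argues directly: it fixes a $\sigma$ with $Out_\sigma\in\delta_\sigma(s_{I_\sigma},u_{J_\sigma})$, and in the nontrivial case (first disjunct of (\ref{eq trans1b}) holds while the second fails) it localizes the violation to the uncontrolled components, $\pi_{I_\sigma^o}(F(s,\{u\}))\cap\mathcal X_{I_\sigma^o}=\emptyset$, extracts a single index $i\in I_\sigma^o$ with $\pi_i(F(s,\{u\}))\nsubseteq\mathcal X_i$, and passes to the subsystem $\sigma'$ controlling $i$ to conclude $\pi_{I_{\sigma'}^c}(\Phi_{\sigma'}(s_{I_{\sigma'}},u_{J_{\sigma'}}))\nsubseteq\mathcal X_{I_{\sigma'}^c}$. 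Both arguments use the same ingredients --- (\ref{eq p1b}), the product structure of Assumption~\ref{assum set}, the fact that $(I_1^c,\dots,I_m^c)$ partitions $I$, and (explicitly in yours, implicitly in the paper's) nonemptiness of $F(s,\{u\})$ --- but they buy slightly different things: your contrapositive is shorter and makes transparent that the lemma is a sharpening of Proposition~\ref{prop input composed}, whereas the paper's direct proof identifies, for each $Out$ transition, a concrete subsystem $\sigma'$ whose \emph{controlled} components witness the violation; notably $\sigma'$ may differ from the subsystem $\sigma$ that triggered the transition, which is precisely the phenomenon illustrated in Figure~\ref{fig ag2}.c.
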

\begin{proof}
Sufficiency is straightforward from (\ref{eq trans1b}) and (\ref{eq trans2b}). As for necessity,
if $Out\in \delta_c(s,u)$, then there exists a subsystem $\sigma$ such that $Out_\sigma\in \delta_\sigma(s_{I_\sigma},u_{J_\sigma})$.
From (\ref{eq trans1b}), either  $\pi_{I_\sigma^c}(\Phi_\sigma(s_{I_\sigma},u_{J_\sigma}))\not\subseteq \mathcal X_{I_\sigma^c}$ (in which case the property holds), or
$\pi_{I_\sigma^c}(\Phi_\sigma(s_{I_\sigma},u_{J_\sigma})) \subseteq \mathcal X_{I_\sigma^c}$ and 
$\pi_{I_\sigma}(\Phi_\sigma(s_{I_\sigma},u_{J_\sigma})) \cap \mathcal X_{I_\sigma} =\emptyset$.
Then, by  Assumption~\ref{assum set} and since $I^o_\sigma=I_\sigma \setminus I_\sigma^c$, it follows that $\pi_{I^o_\sigma}(\Phi_\sigma(s_{I_\sigma},u_{J_\sigma})) \cap \mathcal X_{I^o_\sigma} =\emptyset$. Then, 
by (\ref{eq p1b}), $\pi_{I^o_\sigma}(F(s,\{u\})) \cap \mathcal X_{I^o_\sigma} =\emptyset$.
Thus, it follows that $\pi_{I^o_\sigma}(F(s,\{u\})) \nsubseteq \mathcal X_{I^o_\sigma}$.
From Assumption~\ref{assum set}, there exists $i\in I_\sigma^o$, such that  $\pi_{i}(F(s,\{u\})) \nsubseteq \mathcal X_{i}$.
Then, let $\sigma' \in \Sigma$ such that $i\in I_{\sigma'}^c$, then $\pi_{I_{\sigma'}^c}(F(s,\{u\})) \nsubseteq \mathcal X_{I_{\sigma'}^c}$.
By  (\ref{eq p1b}), it follows that $\pi_{I_{\sigma'}^c}(\Phi_\sigma(s_{I_{\sigma'}},u_{J_{\sigma'}})) \nsubseteq \mathcal X_{I_{\sigma'}^c}$ and the property holds.
\end{proof}

%%%%%%%%%%%%%%%%%%%%%%%%%%%%%%%%%
%%%%%%%%%%%%%%%%%%%%%%%%%%%%%%%%%
\section{Compositional safety synthesis}
\label{sec synthesis}

In this section, we consider the problem of synthesizing a safety controller for transition system $S$ and safe set $\mathcal X$.
Because of the feedback refinement relation from $S$ to $S_c$, this can be done by solving the safety synthesis problem for transition system $S_c$ and safe set $X_c^0$.
We propose a compositional approach, which works on the symbolic subsystems $S_\sigma$ and does not require computing the composed abstraction $S_c$.

For $\sigma \in \Sigma$, let $C_\sigma^*:X_\sigma \rightarrow 2^{U_\sigma}$ be the maximal safety controller for transition system $S_\sigma$ and safe set $X_\sigma^0$. Since $S_\sigma$ has only finitely many states and inputs, $C_\sigma^*$ can be computed in finite time using a fixed point algorithm~\cite{tabuada2009symbolic}.
Now, let the controller $C_c:X_c \rightarrow 2^{U_c}$ be defined by $C_c(Out)=\emptyset$ and
\begin{equation}
\label{eq Cc}
\forall s\in X_c^0,\; C_c(s)=\{u \in U_c |\; u_{J_\sigma} \in C^*_\sigma(s_{I_\sigma}), \forall \sigma \in \Sigma \}.
\end{equation}

\begin{thm}
\label{th safety composition} Under Assumption~\ref{assum set},
$C_c$ is a safety controller for transition system $S_c$ and safe set $X_c^0$.
\end{thm}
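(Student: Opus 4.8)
The plan is to verify directly the three defining conditions of a safety controller (as stated in Section~\ref{sub preli alternating}) for the map $C_c$ on transition system $S_c$ and safe set $X_c^0$. The first two conditions are immediate. For input admissibility, Proposition~\ref{prop input composed} gives $U_c(s)=U_c$ for every $s\in X_c^0$, so $C_c(s)\subseteq U_c=U_c(s)$ holds by the very definition (\ref{eq Cc}) of $C_c$; and for $s=Out$ we have $C_c(Out)=\emptyset\subseteq U_c(Out)$. For the domain condition, since $C_c(Out)=\emptyset$ by construction, $Out\notin dom(C_c)$, hence $dom(C_c)\subseteq X_c^0$. All the real work therefore lies in the controlled-invariance condition: for every $s\in dom(C_c)$ and $u\in C_c(s)$, one must show $\delta_c(s,u)\subseteq dom(C_c)$.

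To establish this, I would fix $s\in dom(C_c)\subseteq X_c^0$ and $u\in C_c(s)$; by (\ref{eq Cc}), $u_{J_\sigma}\in C^*_\sigma(s_{I_\sigma})$ for every $\sigma\in\Sigma$, so in particular $s_{I_\sigma}\in dom(C^*_\sigma)$. Since $C^*_\sigma$ is a safety controller for $S_\sigma$ and safe set $X_\sigma^0$, it satisfies $\delta_\sigma(s_{I_\sigma},u_{J_\sigma})\subseteq dom(C^*_\sigma)\subseteq X_\sigma^0$. The first thing to rule out is a transition to $Out$: because $Out_\sigma\notin X_\sigma^0$, the inclusion just obtained gives $Out_\sigma\notin\delta_\sigma(s_{I_\sigma},u_{J_\sigma})$ for every $\sigma$, so by (\ref{eq trans2b}) we conclude $Out\notin\delta_c(s,u)$. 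Consequently $\delta_c(s,u)\subseteq X_c^0$, and it remains only to show that each such successor lies in $dom(C_c)$.

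Next I would take any $s'\in\delta_c(s,u)$, which now necessarily belongs to $X_c^0$. By (\ref{eq trans2a}), $s'_{I_\sigma}\in\delta_\sigma(s_{I_\sigma},u_{J_\sigma})$ for all $\sigma\in\Sigma$, and combined with the invariance of $C^*_\sigma$ this yields $s'_{I_\sigma}\in dom(C^*_\sigma)$, i.e.\ $C^*_\sigma(s'_{I_\sigma})\neq\emptyset$, for every $\sigma$. To conclude $s'\in dom(C_c)$ I must exhibit a single global input $u'\in U_c$ with $u'_{J_\sigma}\in C^*_\sigma(s'_{I_\sigma})$ for all $\sigma$. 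This recombination is the crux of the argument and the place where the decomposition hypotheses are essential: because $(J_1,\dots,J_m)$ is a partition of $J$, we have $U_c=\mathcal V=\prod_{j\in J}\mathcal V_j=\prod_{\sigma\in\Sigma}\prod_{j\in J_\sigma}\mathcal V_j=\prod_{\sigma\in\Sigma}U_\sigma$. Hence, choosing independently for each $\sigma$ some $u'_{J_\sigma}\in C^*_\sigma(s'_{I_\sigma})$ and assembling these disjoint input blocks produces a well-defined $u'\in U_c$ with $u'\in C_c(s')$, so $C_c(s')\neq\emptyset$ and $s'\in dom(C_c)$.

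The main obstacle is precisely this last recombination step. The overlapping of the modeled state index sets $I_\sigma$ is harmless, since one merely projects the common successor $s'$ onto each $s'_{I_\sigma}$; but on the input side a single $u'$ must be simultaneously consistent across all subsystems. This is possible only because the controlled input index sets $J_\sigma$ are pairwise disjoint and exhaust $J$, so the nonempty local controller values can be selected independently and glued together without conflict. Were the $J_\sigma$ allowed to overlap, the intersection of the local admissible input sets could be empty and the claim would fail, which is why the disjointness of the $J_\sigma$ (rather than that of the $I_\sigma$) is the load-bearing hypothesis here.
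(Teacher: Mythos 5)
Your proof is correct and follows essentially the same route as the paper's: admissibility via Proposition~\ref{prop input composed}, the domain condition from $C_c(Out)=\emptyset$, exclusion of $Out$-transitions via the safety of each $C^*_\sigma$ together with (\ref{eq trans2b}), and reconstruction of a global input at the successor state by gluing local inputs $u'_{J_\sigma}\in C^*_\sigma(s'_{I_\sigma})$ using the fact that $(J_1,\dots,J_m)$ partitions $J$. The only cosmetic difference is that you rule out $Out\in\delta_c(s,u)$ directly from $\delta_\sigma(s_{I_\sigma},u_{J_\sigma})\subseteq dom(C^*_\sigma)\subseteq X^0_\sigma$, whereas the paper phrases the same step as a contradiction.
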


\begin{proof} 
From Proposition~\ref{prop input composed} and since $C_c(Out)=\emptyset$, it is clear that for all $s\in X_c$, we have $C_c(s) \subseteq U_c(s)$.
$C_c(Out)=\emptyset$ also gives $dom(C_c)\subseteq X_c^0$. Then, let $s\in dom(C_c)\subseteq X_c^0$, $u\in C_c(s)$ and $s'\in \delta_c(s,u)$.
If $s'\notin X_c^0$, then $s'=Out$ and from (\ref{eq trans2b}), there exists $\sigma \in \Sigma$, such that $Out_\sigma \in \delta_\sigma(s_{I_\sigma},u_{J_\sigma})$, which contradicts the fact that $u_{J_\sigma} \in C^*_\sigma(s_{I_\sigma})$ with $C^*_\sigma$ safety controller for transition system $S_\sigma$ and safe set $X_\sigma^0$. 
Hence, we necessarily have $s'\in X_c^0$, and from (\ref{eq trans2a}), it follows that $s'_{I_\sigma} \in \delta_\sigma(s_{I_\sigma},u_{J_\sigma})$, for all $\sigma\in \Sigma$. 
Moreover, $u_{J_\sigma} \in C^*_\sigma(s_{I_\sigma})$ gives that $s'_{I_\sigma} \in dom(C^*_\sigma)$. 
Then for all $\sigma \in \Sigma$, let $u'_{\sigma} \in C^*_\sigma(s'_{I_\sigma})$. Since $(J_1,\dots,J_m)$ is a partition of $J$, there exists $u' \in U_c$ such that $u'_{J_\sigma}=u'_{\sigma}$
for all $\sigma \in \Sigma$. Then, by (\ref{eq Cc}), $u'\in C_c(s')$ and thus $s'\in dom(C_c)$. It follows that $C_c$ is a safety controller for transition system $S_c$ and safe set $X_c^0$.
\end{proof}

\begin{remark}
Since the sets of modeled state components $I_\sigma$ may overlap, it is in principle possible that $dom(C_c)=\emptyset$ while $dom(C_\sigma^*)\ne \emptyset$, for all $\sigma \in \Sigma$. The reason is that an element of $dom(C_c)$ is obtained from states in $dom(C_\sigma^*)$, which coincide on their common modeled states, as shown in (\ref{eq Cc}).

\end{remark}

%%%%%%%%%%%%%%%%%%%%%%%%%%%%%%%%%
%%%%%%%%%%%%%%%%%%%%%%%%%%%%%%%%%
\subsection{Particular case: non-overlapping state sets}
\label{sec particular cases}

Though $C_\sigma^*$ is a maximal safety controller for all $\sigma\in \Sigma$, the safety controller $C_c$ is generally not maximal. 
Maximality can be obtained when the set of modeled states $I_\sigma$, $\sigma\in \Sigma$ do not overlap (or equivalently when for all $\sigma \in \Sigma$, $I_\sigma^c=I_\sigma$).
In that case,  the following result holds:
\begin{prop}
\label{pro safety max}
Under Assumption~\ref{assum set}, let $I_\sigma^c=I_\sigma$, for all $\sigma\in \Sigma$.
Then, $C_c$ is the maximal safety controller for transition system $S_c$ and safe set $X_c^0$.
\end{prop}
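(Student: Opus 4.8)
Since $I_\sigma^c=I_\sigma$ for every $\sigma$ and $(I_1^c,\dots,I_m^c)$ partitions $I$, the index sets $I_\sigma$ partition $I$ (and the $J_\sigma$ partition $J$). Consequently $X_c^0=\prod_\sigma X_\sigma^0$ and $U_c=\prod_\sigma U_\sigma$ under the identification $s\leftrightarrow(s_{I_1},\dots,s_{I_m})$, $u\leftrightarrow(u_{J_1},\dots,u_{J_m})$, and by (\ref{eq trans2a})--(\ref{eq trans2b}) the restriction of $\delta_c$ to $X_c^0$ is exactly the product of the $\delta_\sigma$. Theorem~\ref{th safety composition} already gives that $C_c$ is a safety controller, hence $C_c(s)\subseteq C^*(s)$ for the maximal safety controller $C^*$; the plan is therefore to prove the converse inclusion $C^*(s)\subseteq C_c(s)$, which forces $C_c=C^*$. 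Throughout I would use the standard characterization (see~\cite{tabuada2009symbolic}) that $W:=dom(C^*)$ is the largest controlled invariant subset of $X_c^0$, with $C^*(s)=\{u\in U_c\mid \delta_c(s,u)\subseteq W\}$ for $s\in W$, and likewise that $W_\sigma:=dom(C_\sigma^*)$ is the largest controlled invariant subset of $X_\sigma^0$ with the analogous formula for $C_\sigma^*$.

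The heart of the argument is a \emph{projection lemma}: for each $\sigma$, the set $W^\sigma:=\{\,s_{I_\sigma}\mid s\in W\,\}\subseteq X_\sigma^0$ is controlled invariant for $S_\sigma$, and hence $W^\sigma\subseteq W_\sigma$ by maximality. To prove this, I fix $t_\sigma\in W^\sigma$ and choose $s\in W$ with $s_{I_\sigma}=t_\sigma$. Controlled invariance of $W$ provides an input $u$ with $\delta_c(s,u)\subseteq W$, which is nonempty by Proposition~\ref{prop input composed}; since $W\subseteq X_c^0$, this forces $Out\notin\delta_c(s,u)$, so by (\ref{eq trans2b}) we have $Out_\tau\notin\delta_\tau(s_{I_\tau},u_{J_\tau})$ for every $\tau$. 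I then claim $\delta_\sigma(s_{I_\sigma},u_{J_\sigma})\subseteq W^\sigma$: any $t_\sigma'\in\delta_\sigma(s_{I_\sigma},u_{J_\sigma})$ lies in $X_\sigma^0$ (it is not $Out_\sigma$), and by Remark~\ref{remark:subsystem1} each $\delta_\tau(s_{I_\tau},u_{J_\tau})$ is nonempty and, being $Out_\tau$-free, meets $X_\tau^0$; one may therefore assemble a state $s'\in X_c^0$ with $s'_{I_\sigma}=t_\sigma'$ and $s'_{I_\tau}\in\delta_\tau(s_{I_\tau},u_{J_\tau})$ for $\tau\neq\sigma$. By (\ref{eq trans2a}) this $s'$ satisfies $s'\in\delta_c(s,u)\subseteq W$, whence $t_\sigma'=s'_{I_\sigma}\in W^\sigma$, establishing the claim and the lemma.

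With the projection lemma in hand, take $s\in X_c^0$ and $u\in C^*(s)$, so $s\in W$ and $\delta_c(s,u)\subseteq W$. For each $\sigma$ we get $s_{I_\sigma}\in W^\sigma\subseteq W_\sigma=dom(C_\sigma^*)$, and the same assembly argument applied to the successors of $(s,u)$ shows every $t_\sigma'\in\delta_\sigma(s_{I_\sigma},u_{J_\sigma})$ satisfies $t_\sigma'\in W^\sigma\subseteq W_\sigma$, i.e.\ $\delta_\sigma(s_{I_\sigma},u_{J_\sigma})\subseteq W_\sigma$; by the characterization of $C_\sigma^*$ this gives $u_{J_\sigma}\in C_\sigma^*(s_{I_\sigma})$. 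Since this holds for all $\sigma$, (\ref{eq Cc}) yields $u\in C_c(s)$, proving $C^*(s)\subseteq C_c(s)$ (the case $s=Out$ being immediate as $C^*(Out)=C_c(Out)=\emptyset$). Combined with $C_c\subseteq C^*$ from Theorem~\ref{th safety composition}, we conclude $C_c=C^*$. I expect the projection lemma to be the main obstacle, specifically the assembly step that completes a single subsystem successor $t_\sigma'$ to a full composed successor $s'\in W$: this is precisely where non-overlap is indispensable, since it makes $X_c^0$ a genuine product so that independently chosen components combine into a legitimate element of $X_c^0$, while Remark~\ref{remark:subsystem1} guarantees those components exist. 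For overlapping $I_\sigma$ this completion can fail, consistent with the remark following Theorem~\ref{th safety composition} that $C_c$ need not be maximal in general.
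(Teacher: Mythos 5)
Your proof is correct and is essentially the paper's argument: the paper projects an arbitrary safety controller $C_c'$ onto each subsystem via (\ref{eq proj safe}) and shows this projection is a safety controller for $S_\sigma$, which is precisely your projection lemma for $W=dom(C^*)$ restated in the language of controlled invariant sets. The crux in both is the step you single out --- completing one subsystem successor $t_\sigma'$ to a composed successor $s'\in\delta_c(s,u)$ using the product structure of $X_c^0$ under non-overlap, Remark~\ref{remark:subsystem1}, and the absence of $Out_\tau$ transitions --- after which maximality of the subsystem controllers (equivalently, of the sets $W_\sigma$) finishes the argument identically.
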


\begin{proof} Let $C_c' : X_c \rightarrow 2^{U_c}$ be a safety controller for transition system $S_c$ and safe set $X_c^0$.
For $\sigma\in \Sigma$, let the controllers $C_\sigma': X_\sigma \rightarrow 2^{U_\sigma}$ be defined by $C_\sigma'(Out_\sigma)=\emptyset$ and
for all $s_\sigma \in X_\sigma^0$, 
\begin{equation}
\label{eq proj safe}
C_\sigma'(s_\sigma)=\{u_\sigma \in \pi_{J_\sigma}(C_c'(s)) |\; s\in X_c^0, \; s_{I_\sigma}=s_\sigma\}.
\end{equation}
Let us show that $C_\sigma'$ is a safety controller for system $S_\sigma$ and safe set $X_\sigma^0$.
Following Remark~\ref{remark:subsystem1}, and since $C_\sigma'(Out_\sigma)=\emptyset$, it is clear that for all $s_\sigma \in X_\sigma$, we have $C_\sigma'(s_\sigma) \subseteq U_\sigma(s_\sigma)$.
$C_\sigma'(Out_\sigma)=\emptyset$ also gives $dom(C_\sigma')\subseteq X_\sigma^0$. 
Then, let $s_\sigma\in dom(C_\sigma')$, $u_\sigma \in C_\sigma'(s_\sigma)$ and $s_\sigma' \in \delta_\sigma(s_\sigma,u_\sigma)$, let us prove that $s'_\sigma \in dom (C_\sigma')$.
By (\ref{eq proj safe}), there exists $s\in dom(C_c')$ and $u\in C_c'(s)$ such that $s_{I_\sigma}=s_\sigma$ and $u_{J_\sigma}=u_\sigma$. 
Since $C_c'$ is a safety controller, $\delta_c(s,u)\subseteq dom(C_c') \subseteq X_c^0$.
Moreover, since  the sets $I_\sigma$ are not overlapping, it follows from (\ref{eq trans2a}) that 
there exists $s'\in \delta_c(s,u)$ such that $s'_{I_\sigma}=s'_\sigma$.
Then, $s'\in dom(C_c')$ and (\ref{eq proj safe}) give that $s'_\sigma \in dom (C_\sigma')$.
Hence  $C_\sigma'$ is a safety controller for system $S_\sigma$ and safe set $X_\sigma^0$.
Then, by maximality of $C_\sigma^*$, it follows that for all $s_\sigma\in X_\sigma^0$, $C_\sigma'(s_\sigma) \subseteq C_\sigma^*(s_\sigma)$.
Finally,
let $s\in dom(C_c')$ and $u\in C_c'(s)$, then by (\ref{eq proj safe}), $u_{J_\sigma} \in C_\sigma'(s_{I_\sigma})\subseteq C_\sigma^*(s_{I_\sigma})$ for all $\sigma \in \Sigma$.
By (\ref{eq Cc}), $u\in C_c(s)$, which shows the maximality of $C_c$.
\end{proof}

%%%%%%%%%%%%%%%%%%%%%%%%%%%%%%%%%
%%%%%%%%%%%%%%%%%%%%%%%%%%%%%%%%%
\section{Comparisons}
\label{sec comparison}

In this section, we provide theoretical comparisons between abstractions and controllers given by the previous approach using two different system decompositions.

In addition to the set of state and input indices defined in Section~\ref{sub compo decomposition},  let us consider partitions $(\hat I_1^c,\dots,\hat  I_{\hat m}^c)$ and $(\hat J_1,\dots,\hat J_{\hat m})$ of the state and input indices and subsets of state indices $(\hat I_1,\dots, \hat I_{\hat m})$ with $\hat I_{\hat \sigma}^c \subseteq \hat I_{\hat \sigma}$, for all ${\hat \sigma} \in \hat \Sigma=\{1,\dots,{\hat m}\}$. We define the same objects as before (i.e. subsystems, abstraction, controllers, etc.) for this system decomposition and denote them with hatted notations. We make the following assumption on the two system decompositions under consideration.

\begin{assum}
\label{assum set inclusion}
There exists a surjective map $\gamma: \hat \Sigma \rightarrow \Sigma$ such that, 
for all $\hat \sigma \in \hat \Sigma$ and $\sigma=\gamma(\hat\sigma) \in \Sigma$,
$$
\hat I^c_{\hat \sigma} \subseteq I^c_\sigma, \;  \hat I_{\hat \sigma} \subseteq I_\sigma, \; \hat J_{\hat \sigma} \subseteq J_\sigma.
$$
\end{assum}
From the previous assumption, and since $(\hat I_1^c,\dots,\hat  I_{\hat m}^c)$ and $(\hat J_1,\dots,\hat J_{\hat m})$ are partitions of the state and input indices, we have that
\begin{equation}
\label{eq set inclusion}
\forall \sigma \in \Sigma, \bigcup_{\hat \sigma \in \gamma^{-1} (\sigma)} \hat I^c_{\hat \sigma} = I^c_\sigma\; \text{ and }  \bigcup_{\hat \sigma \in \gamma^{-1} (\sigma)} \hat J_{\hat \sigma} = J_\sigma.
\end{equation} 

In addition, we will make the following mild assumption on the over-approximations of the reachable sets:
\begin{assum}
\label{assum reachable set comparison}
For all $\mathcal X'' \subseteq \mathcal X' \subseteq \R^n$, $\mathcal U'' \subseteq \mathcal U' \subseteq \mathcal U$, the following inclusion holds
$$
\overline{F}(\mathcal X'',\mathcal U'') \subseteq \overline{F}(\mathcal X',\mathcal U').
$$
\end{assum}
This assumption can be shown to be satisfied by most existing techniques for over-approximating the reachable set, and in particular by those mentioned in Section~\ref{sub preli cooperative}.
In addition, under Assumptions~\ref{assum set inclusion} and~\ref{assum reachable set comparison}, it follows from (\ref{eq reachable set Si}), that for all $\hat \sigma \in \hat \Sigma$ and $\sigma=\gamma(\hat\sigma) \in \Sigma$,
\begin{equation}
\label{eq reachable set comparison}
\forall s\in \mathcal P,\; u\in \mathcal V,\; \Phi_\sigma(s_{I_\sigma},u_{J_\sigma}) \subseteq \hat \Phi_{\hat \sigma}(s_{\hat I_{\hat \sigma}},u_{\hat J_{\hat \sigma}}) .
\end{equation}

%%%%%%%%%%%%%%%%%%%%%%%%%%%%%%%%%
\subsection{Abstractions}
\label{sub compare alternating}

We start by comparing the compositional abstractions $S_c$ and $\hat S_c$ resulting from the two different decompositions:
\begin{thm}
\label{prop alternating sufficient}
Under Assumptions~\ref{assum set}, \ref{assum set inclusion} and~\ref{assum reachable set comparison}, the identity map is a feedback refinement relation from $S_c$ to $\hat S_c$:
$ S_c \preceq_{\mathcal{FR}} \hat S_c$.
\end{thm}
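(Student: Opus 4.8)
The plan is to verify the two conditions of Definition~\ref{def simulation} directly for the map $H=\mathrm{id}$, with $S_a=S_c$ and $S_b=\hat S_c$. First I would record the structural fact that makes the identity map meaningful: both composed abstractions are built from the \emph{same} underlying partition $\mathcal P$ of $\mathcal X$ and the \emph{same} finite input set $\mathcal V$, so $X_c=\hat X_c$ (with $X_c^0=\hat X_c^0=\mathcal P$ and $Out=\R^n\setminus\mathcal X$) and $U_c=\hat U_c=\mathcal V$. The condition $\hat U_c\subseteq U_c$ is then trivial, and the enabled-input condition $\hat U_c(s)\subseteq U_c(s)$ follows immediately from Proposition~\ref{prop input composed} applied to both decompositions, since $U_c(s)=\hat U_c(s)=\mathcal V$ for every $s\in X_c^0$ and both are empty at $Out$. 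The whole statement therefore reduces to proving the transition inclusion $\delta_c(s,u)\subseteq\hat\delta_c(s,u)$ for every $s\in X_c^0$ and $u\in U_c$, the case $s=Out$ being vacuous as $U_c(Out)=\emptyset$.

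I would split this inclusion according to whether the successor is $Out$ or an element of $X_c^0$. If $Out\in\delta_c(s,u)$, Lemma~\ref{prop ag2} supplies a $\sigma\in\Sigma$ with $\pi_{I_\sigma^c}(\Phi_\sigma(s_{I_\sigma},u_{J_\sigma}))\nsubseteq\mathcal X_{I_\sigma^c}$; by Assumption~\ref{assum set} this failure occurs in some coordinate $i\in I_\sigma^c$, and by (\ref{eq set inclusion}) there is $\hat\sigma\in\gamma^{-1}(\sigma)$ with $i\in\hat I_{\hat\sigma}^c$. The reachable-set comparison (\ref{eq reachable set comparison}) gives $\Phi_\sigma(s_{I_\sigma},u_{J_\sigma})\subseteq\hat\Phi_{\hat\sigma}(s_{\hat I_{\hat\sigma}},u_{\hat J_{\hat\sigma}})$, so the same coordinate $i$ witnesses $\pi_{\hat I_{\hat\sigma}^c}(\hat\Phi_{\hat\sigma}(s_{\hat I_{\hat\sigma}},u_{\hat J_{\hat\sigma}}))\nsubseteq\mathcal X_{\hat I_{\hat\sigma}^c}$; the hatted instance of Lemma~\ref{prop ag2}, applied to $\hat S_c$, then yields $Out\in\hat\delta_c(s,u)$.

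For a successor $s'\in X_c^0$, (\ref{eq trans2a}) and (\ref{eq trans1a}) provide, for each $\sigma\in\Sigma$, a point $z_\sigma\in\Phi_\sigma(s_{I_\sigma},u_{J_\sigma})$ with $\pi_{I_\sigma}(z_\sigma)\in s'_{I_\sigma}$. Fixing $\hat\sigma$ and $\sigma=\gamma(\hat\sigma)$, I would push $z_\sigma$ through (\ref{eq reachable set comparison}) so that $z_\sigma\in\hat\Phi_{\hat\sigma}(s_{\hat I_{\hat\sigma}},u_{\hat J_{\hat\sigma}})$, and then exploit that $s'$ is a product box: since $\hat I_{\hat\sigma}\subseteq I_\sigma$, the coordinates of $z_\sigma$ indexed by $\hat I_{\hat\sigma}$ already lie in $s'_{\hat I_{\hat\sigma}}$, whence $\pi_{\hat I_{\hat\sigma}}(z_\sigma)\in s'_{\hat I_{\hat\sigma}}\cap\pi_{\hat I_{\hat\sigma}}(\hat\Phi_{\hat\sigma}(s_{\hat I_{\hat\sigma}},u_{\hat J_{\hat\sigma}}))$. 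By (\ref{eq trans1a}) this gives $s'_{\hat I_{\hat\sigma}}\in\hat\delta_{\hat\sigma}(s_{\hat I_{\hat\sigma}},u_{\hat J_{\hat\sigma}})$ for every $\hat\sigma$, hence $s'\in\hat\delta_c(s,u)$ by (\ref{eq trans2a}). I expect this $X_c^0$ case to be the main obstacle: unlike the $Out$ case it cannot lean on Lemma~\ref{prop ag2}, and some care is needed with the projection bookkeeping, using the box structure of $s'$ to transfer a witness from the larger index set $I_\sigma$ down to $\hat I_{\hat\sigma}$, while observing that the witness may be chosen independently for each $\hat\sigma$, so that no global consistency among the $z_\sigma$ is required.
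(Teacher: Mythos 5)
Your proposal is correct and follows essentially the same route as the paper's proof: the same reduction to $\delta_c(s,u)\subseteq\hat\delta_c(s,u)$ via Proposition~\ref{prop input composed}, the same two-case split, with the $Out$ case handled through Lemma~\ref{prop ag2}, Assumption~\ref{assum set}, (\ref{eq set inclusion}) and (\ref{eq reachable set comparison}), and the $X_c^0$ case handled by transferring intersection witnesses from $I_\sigma$ down to $\hat I_{\hat\sigma}$. Your only cosmetic deviations are phrasing the $X_c^0$ case with explicit points $z_\sigma$ rather than set intersections, and invoking the hatted instance of Lemma~\ref{prop ag2} where the paper cites (\ref{eq trans1b}) and (\ref{eq trans2b}) directly; both are equivalent.
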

\begin{proof}
Let us first remark that  $X_c^0= \hat X_c^0$, $X_c = \hat X_c$ and $U_c=\hat U_c$.
Then, from Proposition~\ref{prop input composed}, for all $s\in X_c^0 = \hat X_c^0$, $U_c(s)=U_c=\hat U_c=\hat U_c(s)$. 
Since $\hat U_c(Out)=\emptyset$, Definition \ref{def simulation} holds if $\delta_c(s,u)\subseteq \hat{\delta}_c(s,u)$, for all $s \in X_c^0$, $u\in U_c$.
 
Hence, let  $s \in X_c^0$, $u\in U_c$ and $s'\in \delta_c(s,u)$, then let us consider the two possible cases:
\begin{itemize}
\item $s'\in X_c^0$ -- We have by (\ref{eq trans2a}) and (\ref{eq trans1a}) that 
$$
\forall \sigma \in \Sigma,\; s_{I_\sigma}'\cap\pi_{I_\sigma}(\Phi_\sigma(s_{I_\sigma},u_{J_\sigma}))\neq\emptyset.
$$
Then, from (\ref{eq reachable set comparison}), follows that
$$
\forall \hat \sigma \in \hat \Sigma \text{ and } \sigma=\gamma(\hat \sigma),\;  s_{I_\sigma}'\cap\pi_{I_\sigma}(\hat \Phi_{\hat \sigma}(s_{\hat I_{\hat \sigma}},u_{\hat J_{\hat \sigma}}))\neq\emptyset.
$$
By Assumption~\ref{assum set inclusion}, $\hat I_{\hat \sigma} \subseteq I_\sigma$, for all $\hat \sigma \in \hat \Sigma$ and $\sigma=\gamma(\hat \sigma)$.
Thus it follows that
$$
\forall \hat \sigma \in \hat \Sigma,\;  s_{\hat I_{\hat \sigma}}'\cap\pi_{\hat I_{\hat \sigma}}(\hat \Phi_{\hat \sigma}(s_{\hat I_{\hat \sigma}},u_{\hat J_{\hat \sigma}}))\neq\emptyset.
$$
Then, from  (\ref{eq trans1a}) and (\ref{eq trans2a}), we have $s'\in  \hat{\delta}_c(s,u)$.
\item $s'=Out$ -- From Lemma~\ref{prop ag2}, we know that there exists $\sigma \in \Sigma$, such that $\pi_{I_\sigma^c}(\Phi_\sigma(s_{I_\sigma},u_{J_\sigma}))\nsubseteq\mathcal X_{I_\sigma^c}$. Then, from Assumption~\ref{assum set}, there exists $i\in I_\sigma^c$ such that $\pi_{i}(\Phi_\sigma(s_{I_\sigma},u_{J_\sigma}))\nsubseteq\mathcal X_{i}$. From (\ref{eq set inclusion}), there exists $\hat \sigma \in \hat \Sigma$, such that $\sigma=\gamma(\hat \sigma)$ and $i\in \hat I_{\hat \sigma}^c$.
From (\ref{eq reachable set comparison}), it follows that $\pi_{i}(\hat \Phi_{\hat \sigma}(s_{\hat I_{\hat \sigma}},u_{\hat J_{\hat \sigma}}))\nsubseteq\mathcal X_{i}$
and  $\pi_{\hat I_{\hat \sigma}^c}(\hat \Phi_{\hat \sigma}(s_{\hat I_{\hat \sigma}},u_{\hat J_{\hat \sigma}}))\nsubseteq\mathcal X_{\hat I_{\hat \sigma}^c}$.
Then, from  (\ref{eq trans1b}) and (\ref{eq trans2b}), we have $Out \in  \hat{\delta}_c(s,u)$.
\end{itemize}
\end{proof}

Note that the conditions in the previous Theorem are only sufficient conditions, since depending on the dynamics of the system, a feedback refinement relation could also exist between two unrelated decompositions (in terms of index set inclusion).  

\begin{remark} Theorem~\ref{prop alternating sufficient} gives an indication on how one should modify the sets of indices to reduce the conservatism of the compositional symbolic abstraction. Firstly, one can keep the same number of subsystems and the same controlled states $I_\sigma^c$ and modeled control input $J_\sigma$, while 
considering additional modeled but uncontrolled states in $I_\sigma^o$. Secondly, one can merge two or more subsystems by merging their controlled states, modeled control inputs and modeled but uncontrolled states.
\end{remark}

%%%%%%%%%%%%%%%%%%%%%%%%%%%%%%%%%
\subsection{Controllers}
\label{sub compare safety}

We now compare the controllers  obtained by the approach described in Section~\ref{sec synthesis}. 
The comparison of controllers is more delicate than the comparison of abstractions and we shall need the additional assumption that the sets of indices $I_\sigma$ do not overlap (note that the sets $\hat I_{\hat \sigma}$ may still overlap).

\begin{corollary}
\label{pro safety comparison}
Under Assumptions~\ref{assum set}, \ref{assum set inclusion} and~\ref{assum reachable set comparison}, let $I_\sigma^c=I_\sigma$, for all $\sigma\in \Sigma$.
Then, for all $s\in X_c$, $\hat C_c(s)\subseteq C_c(s)$.
\end{corollary}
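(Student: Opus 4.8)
The plan is to exploit the feedback refinement relation established in Theorem~\ref{prop alternating sufficient} together with the maximality property of $C_c$ guaranteed by Proposition~\ref{pro safety max}. The key observation is that $\hat C_c$, although defined for the finer (hatted) decomposition, can be reinterpreted as a safety controller on $S_c$; once this is done, the maximality of $C_c$ immediately forces the desired inclusion. In other words, the corollary is essentially a composition of facts already proved in the excerpt, assembled via the controller-refinement property recalled at the end of Section~\ref{sub preli alternating}.

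First I would invoke Theorem~\ref{prop alternating sufficient}, which under Assumptions~\ref{assum set}, \ref{assum set inclusion} and~\ref{assum reachable set comparison} gives $S_c \preceq_{\mathcal{FR}} \hat S_c$ via the identity map $H$; in particular the identifications $X_c^0 = \hat X_c^0$ and $U_c = \hat U_c$ noted in its proof hold. By Theorem~\ref{th safety composition} applied to the hatted decomposition, $\hat C_c$ is a safety controller for $\hat S_c$ and safe set $\hat X_c^0$. Then I would apply the controller-refinement property from Section~\ref{sub preli alternating}: since $H$ defines a feedback refinement relation from $S_c$ to $\hat S_c$, the map $s \mapsto \hat C_c(H(s)) = \hat C_c(s)$ is a safety controller for $S_c$ and safe set $H^{-1}(\hat X_c^0) = X_c^0$.

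Next I would use the standing hypothesis $I_\sigma^c = I_\sigma$ for all $\sigma \in \Sigma$, which activates Proposition~\ref{pro safety max}: under this non-overlapping condition, $C_c$ is the \emph{maximal} safety controller for $S_c$ and safe set $X_c^0$. By the characterization of maximality recalled in Section~\ref{sub preli alternating}, every safety controller $C$ for $S_c$ and safe set $X_c^0$ satisfies $C(s) \subseteq C_c(s)$ for all $s \in X_c$. Applying this to the safety controller $\hat C_c$ produced in the previous step yields $\hat C_c(s) \subseteq C_c(s)$ for all $s \in X_c$, which is exactly the claim.

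I do not anticipate a genuine obstacle, since no new combinatorial argument on the index sets is required here. The only point demanding care is the bookkeeping on input and safe sets needed to legitimately invoke the controller-refinement property: one must confirm that $U_{\hat S_c} \subseteq U_{S_c}$ (in fact $U_c = \hat U_c$, so equality holds) and that the pulled-back safe set $H^{-1}(\hat X_c^0)$ coincides with $X_c^0$, both of which follow directly from the identifications established in the proof of Theorem~\ref{prop alternating sufficient}. Note also that the hypothesis $I_\sigma^c = I_\sigma$ is essential, as it is precisely what makes $C_c$ maximal; without it the inclusion need not hold, which matches the remark in Section~\ref{sub compare safety} that the sets $\hat I_{\hat\sigma}$ may still overlap but the sets $I_\sigma$ may not.
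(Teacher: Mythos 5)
Your proof is correct and follows essentially the same route as the paper's: apply Theorem~\ref{th safety composition} to see $\hat C_c$ is a safety controller for $\hat S_c$, transfer it to $S_c$ via the feedback refinement of Theorem~\ref{prop alternating sufficient} (using the identity map and the controller-refinement property of Section~\ref{sub preli alternating}), and conclude by the maximality of $C_c$ from Proposition~\ref{pro safety max}. Your version merely spells out the middle transfer step, which the paper leaves implicit.
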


\begin{proof} From Theorem~\ref{th safety composition}, $\hat C_c$ is a safety controller for system $\hat S_c$ and safe set $\hat X_c$.
From Theorem~\ref{prop alternating sufficient}, it follows that $\hat C_c$ is also a safety controller for system $S_c$ and safe set $X_c$.
Then, by Proposition~\ref{pro safety max}, the maximality of $C_c$ gives us for all $s\in X_c$, $\hat C_c(s)\subseteq C_c(s)$.
\end{proof}

Let us remark that the assumption that the sets of indices $I_\sigma$ do not overlap is instrumental in the proof since it uses Proposition~\ref{pro safety max}.
The question whether similar results hold in the absence of such assumption is an open question, which is left as future research.

%%%%%%%%%%%%%%%%%%%%%%%%%%%%%%%%%
\subsection{Complexity}
\label{sub compare complexity}

In this section, we discuss the computational complexity of the approach and show the advantage of using a compositional approach rather than a centralized one.
Let $|.|$ denote the cardinality of a set.

The computation of symbolic subsystem $S_\sigma$ requires a number of reachable set approximations equal to $\prod_{i\in I_\sigma} |\mathcal P_i| \times \prod_{j\in J_\sigma} |\mathcal V_j|$, each creating up to $(1+\prod_{i\in I_\sigma} |\mathcal P_i|)$ successors.
This results in an overall time and space complexity $\mathcal C_1$ of computing all symbolic subsystems $S_\sigma$, $\sigma \in \Sigma$:
$$
\mathcal C_1 =\mathcal O \Big(\sum_{\sigma\in \Sigma} \big(\prod_{i\in I_\sigma} |\mathcal P_i|^2 \times 
 \prod_{j\in J_\sigma} |\mathcal V_j |\big)\Big).
$$

The computation of the safety controller $C_\sigma$ by a fixed point algorithm requires a number of iteration which is bounded by the number of states in the safe set $X_\sigma^0$: $\prod_{i\in I_\sigma} |\mathcal P_i|$.
The complexity order of computing an iteration can be bounded by the number of transitions in $S_\sigma$. This results in an overall time and space complexity $\mathcal C_2$ of computing all safety controllers $C_\sigma$, $\sigma \in \Sigma$:
$$
\mathcal C_2 =\mathcal O \Big(\sum_{\sigma\in \Sigma} \big(\prod_{i\in I_\sigma} |\mathcal P_i|^3 \times 
 \prod_{j\in J_\sigma} |\mathcal V_j |\big)\Big).
$$

To illustrate the advantage of using a compositional approach, let us consider two extremal cases in the particular case where the number of state and input component are equal
$I=J$. The centralized case corresponds to $\Sigma=\{1\}$, with $I_1=J_1=I$. In that case the complexity of the overall approach is of order
$\mathcal O \Big( \prod_{i\in I}  |\mathcal P_i|^3 \times| \mathcal V_i | \Big).$  
The fully decentralized case corresponds to $\Sigma=I=J$, with $I_\sigma=J_\sigma=\{\sigma\}$ for all $\sigma \in \Sigma$. In that case the complexity of the overall approach is of order
$\mathcal O \Big( \sum_{i\in I}  |\mathcal P_i|^3 \times| \mathcal V_i |  \Big).$
Hence, one can see that while the complexity of the centralized approach is exponential in the number of state and input components $|I|$, it becomes linear with the fully decentralized approach. Intermediate decompositions enable to balance the computational complexity and the conservativeness of the approach, in view of the discussions in Sections~\ref{sub compare alternating} and~\ref{sub compare safety}.

%%%%%%%%%%%%%%%%%%%%%%%%%%%%%%%%%
%%%%%%%%%%%%%%%%%%%%%%%%%%%%%%%%%
\section{Numerical illustration}
\label{sec simulation}
In this section, we illustrate the results of this paper on the temperature regulation in a circular building of $n\geq 3$ rooms, each equipped with a heater.
For each room $i\in\{1,\dots,n\}$, the variations of the temperature $T_i$ are described by the discrete-time model adapted from~\cite{pola2016decentralized}:
\begin{equation*}
\label{eq simulation}
T_i^+=T_i+\alpha(T_{i+1}+T_{i-1}-2T_i)+\beta(T_e-T_i)+\gamma(T_h-T_i)u_i,
\end{equation*}
where $T_{i+1}$ and $T_{i-1}$ are the temperature of the neighbor rooms (with $T_0=T_n$ and $T_{n+1}=T_1$), $T_e=-1\,^\circ C$ is the outside temperature, $T_h=50\,^\circ C$ is the heater temperature, $u_i\in[0,0.6]$ is the control input for room $i$ and the conduction factors are given by $\alpha=0.45$, $\beta=0.045$ and $\gamma=0.09$.
This model can be proved to be monotone as defined in~\cite{angeli_monotone}, which allows us to use efficient algorithms for over-approximating the reachable sets~\cite{meyer2015adhs,coogan2015mixed}.
Moreover, the over-approximation scheme satisfies Assumption~\ref{assum reachable set comparison}.

The safe set $\mathcal X$ is given by a $n$-dimensional interval (specified later) which is uniformly partitioned into $\lambda_T$ intervals per component (for a total of $\lambda_T^n$ symbols in $\mathcal P$) and the control set $\mathcal U=[0,0.6]^n$ is uniformly discretized into $\lambda_u$ values per component (for a total of $\lambda_u^n$ values in $\mathcal V$).
We consider $3$ possible system decompositions, which provides us with $3$ different abstractions:
\begin{itemize}
\item $S_c^1$, the centralized case (i.e. $m=1$), with a single subsystem containing all states and controls, with $I_1^1=I_1^{c1}=J_1^1=\{1,\dots,n\}$;
\item $S_c^2$, a general case from Section~\ref{sec compositional} with $m=n$ subsystems, $I_\sigma^{c2}= J_\sigma^2=\{\sigma\}$ and $I_\sigma^2=\{\sigma-1,\sigma,\sigma+1\}$ for all $\sigma\in\Sigma=\{1,\dots,m\}$;
\item $S_c^3$, a case with $m=n$ subsystems and non-overlapping state sets as in Section~\ref{sec particular cases}, with $I_\sigma^3=I_\sigma^{c3}=J_\sigma^3=\{\sigma\}$ for all $\sigma\in\Sigma$.
\end{itemize}
Both $S_c^2$ and $S_c^3$ have one subsystem per room, but  subsystems of $S_c^3$ only focus on the state and control of the considered room, while subsystems of $S_c^2$ also model (but do not control) the temperatures of both neighbor rooms.

Since Assumption~\ref{assum set inclusion} holds for both pairs $(S_c^1,S_c^2)$ and $(S_c^2,S_c^3)$, Theorem~\ref{prop alternating sufficient} immediately gives the feedback refinements $S_c^1 \preceq_{\mathcal{FR}}S_c^2 \preceq_{\mathcal{FR}}S_c^3$.
Corollary~\ref{pro safety comparison} also holds for the pair $(S_c^1,S_c^2)$ since $I_1^{c1}=I_1^1$,
but it is not guaranteed to hold for the pair $(S_c^2,S_c^3)$ since $I_\sigma^{c2}\neq I_\sigma^2$.
In the following, we report numerical results in two different conditions.
The numerical implementation has been done using Matlab on a laptop with a $2.6$ GHz CPU and $8$ GB of RAM.
\medskip

{\bf Case 1:} {$n=4$, $\mathcal X=[17,22]\times[19,22]\times[20,23]\times[20,22]$.}

The abstractions and syntheses are generated in the $6$ cases corresponding to state partitions with $\lambda_T\in\{5,10,20\}$ and input discretizations with $\lambda_u\in\{3,4\}$.
Table~\ref{tab 1} reports the cardinalities $|\mathcal{P}|=\lambda_T^4$ of the state partition $\mathcal{P}$, and $|dom(C_c)|$ of the domain of the safety controllers for each abstraction $S_c^1$, $S_c^2$ and $S_c^3$.
Table~\ref{tab 3} reports the computation times (in seconds) required to create the abstractions and synthesize safety controllers on all subsystems of $S_c^1$, $S_c^2$ and $S_c^3$.

\begin{table}[!t]
\begin{center}
\begin{tabular}{|c|c||c||c|c|c|}
\hline
$\lambda_T$ & $\lambda_u$ & $|\mathcal{P}|=\lambda_T^4$ & $|dom(C_c^1)|$ & $|dom(C_c^2)|$ & $|dom(C_c^3)|$\\
\hline
$5$ & $3$ & $625$ & $525$ & $0$ & $0$\\
$5$ & $4$ & $625$ & $525$ & $0$ & $0$\\
$10$ & $3$ & $10000$ & $8900$ & $8710$ & $0$\\
$10$ & $4$ & $10000$ & $8900$ & $8710$ & $0$\\
$20$ & $3$ & $160000$ & $145180$ & $143480$ & $0$\\
$20$ & $4$ & $160000$ & $145180$ & $143480$ & $0$\\
\hline
\end{tabular}
\medskip
\caption{\label{tab 1} Number of elements in the domains of the safety controllers for the safe set $\mathcal X=[17,22]\times[19,22]\times[20,23]\times[20,22]$.}
\end{center}
\end{table}

\begin{table}[!t]
\begin{center}
\begin{tabular}{|c|c||c|c|c|}
\hline
$\lambda_T$ & $\lambda_u$ & $S_c^1$ & $S_c^2$ & $S_c^3$\\
\hline
$5$ & $3$ & $1.80$ & $0.17$ & $0.07$\\
$5$ & $4$ & $5.49$ & $0.20$ & $0.07$\\
$10$ & $3$ & $64$ & $0.46$ & $0.06$\\
$10$ & $4$ & $210$ & $0.56$ & $0.06$\\
$20$ & $3$ & $6044$ & $2.87$ & $0.09$\\
$20$ & $4$ & $18339$ & $3.84$ & $0.44$\\
\hline
\end{tabular}
\medskip
\caption{\label{tab 3} Computation times (in seconds) for the safe set $\mathcal X=[17,22]\times[19,22]\times[20,23]\times[20,22]$.}
\end{center}
\end{table}

We check numerically that Theorem~\ref{prop alternating sufficient} and Corollary~\ref{pro safety comparison} hold.
In particular, in these conditions, the safety inclusion $C_c^3(s)\subseteq C_c^2(s)$ for all $s\in\mathcal{P}$ trivially holds due to $dom(C_c^3)=\emptyset$, although Corollary~\ref{pro safety comparison} could not provide theoretical guarantees in this case.

Two main conclusions on the proposed compositional approach can be obtained from Tables~\ref{tab 1} and~\ref{tab 3}.
Firstly, while the compositional case without state overlap (as in $S_c^3$, Section~\ref{sec particular cases} and~\cite{meyer2015adhs}) fails to synthesize safety controllers, the general case allowing state overlaps (as in $S_c^2$ and Section~\ref{sec compositional}) provides significantly better safety results for a relatively small addition to the computation time.
Secondly, the compositional approach with state overlaps $S_c^2$ requires a negligible computation time compared to the large computational cost of the centralized approach $S_c^1$ (e.g.\ in the last row of Table~\ref{tab 3}, we need less than $4$ \emph{seconds} for $S_c^2$ and more than $5$ \emph{hours} for $S_c^1$), while still obtaining similar safety results as long as the state partition $\mathcal{P}$ is not too coarse.

In addition to having more information in each subsystem of $S_c^2$ compared to those in the non-overlapping case $S_c^3$, the better safety results in $S_c^2$ can also be explained by the shapes that can be taken by the domain of the safety controllers with each approach.
On the one hand, the safety domain $dom(C_c^3)$ in the non-overlapping case $S_c^3$ can only take the form of a hyper-rectangle in $\R^4$ since it is obtained by the Cartesian product of the one-dimensional safety domains $dom(C_\sigma^3)$ of its subsystems.
On the other hand, the general case with state overlaps $S_c^2$ is more permissive since its subsystems $S_\sigma^2$ have a three-dimensional state space, thus allowing more complicated shapes of their safety domains $dom(C_\sigma^2)$ as displayed in Figure~\ref{fig simu} for subsystem $\sigma=4$.
$S_c^2$ thus has more chances finding a safety domain compatible with the considered system dynamics and control objective.
\medskip

\ifdouble
  \begin{figure}[tbh]
  \centering
  \includegraphics[width=1\columnwidth]{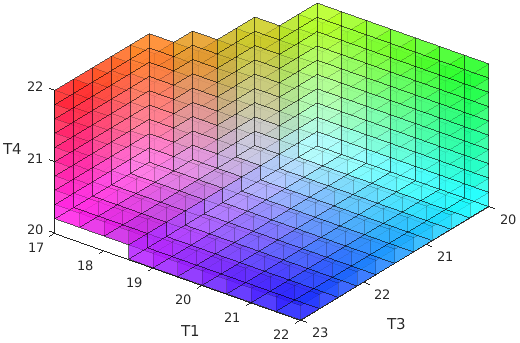}
  \caption{Visualization of the domain $dom(C_\sigma^2)$ of the safety controller for subsystem $\sigma=4$ of $S_c^2$. Each axis is associated with one component of the RGB color model to facilitate the visualization of depth.}
  \label{fig simu}
  \end{figure}
\else
  \begin{figure}[tbh]
  \centering
  \includegraphics[width=0.6\textwidth]{Partial_3D_2}
  \caption{Visualization of the domain $dom(C_\sigma^2)$ of the safety controller for subsystem $\sigma=4$ of $S_c^2$. Each axis is associated with one component of the RGB color model to facilitate the visualization of depth.}
  \label{fig simu}
  \end{figure}
\fi

{\bf Case 2:} {$n=20$, $\mathcal X=[19,21]^{20}$, $\lambda_T=10$, $\lambda_u=5$.}

A second example is proposed to demonstrate the scalability of the compositional approach in a $20$-room building.
Note that the safe set $\mathcal X=[19,21]^{20}$ is only chosen homogeneous in all rooms for convenience of notation, and the proposed approach is still applicable for other safe sets.
Since this case is clearly out of reach from the centralized approach of $S_c^1$, we focus on the compositional abstractions $S_c^2$ and $S_c^3$ with and without state overlaps, respectively.

For the non-overlapping case of $S_c^3$, the total computation time is $0.12$ second and the resulting safety controller is empty ($dom(C_c^3)=\emptyset$).
For the case with state overlaps of $S_c^2$, the total computation time is $3.04$ seconds and the resulting safety controller covers the whole safe set $\mathcal{X}$ ($dom(C_c^2)=\mathcal{P}$).
Therefore, in addition to the scalability of both these compositional approaches, this simulation also confirms the conclusions of the previous example that the method with state overlaps provides significantly better safety results at a reduced computational cost.
We also obtain similarly low computation times while not having to rely on the homogeneity of the specifications as it is the case in~\cite{pola2016decentralized}.

%%%%%%%%%%%%%%%%%%%%%%%%%%%%%%%%%
%%%%%%%%%%%%%%%%%%%%%%%%%%%%%%%%%
\section{Conclusion}
\label{sec conclusion}

In this paper, we presented a new compositional approach for symbolic controller synthesis.
The dynamics are decomposed into subsystems that give a partial description of the global model.
It is remarkable that the sets of states of subsystems can overlap.
Symbolic abstractions can be computed for each subsystem, and a local safety controller can be synthesized such that the composition of the obtained controllers is proved to realize the global safety specification.
Numerical experiments demonstrate the significant complexity reduction compared to centralized approaches and the advantages obtained from the introduction of state overlaps in the subsystems.

Future work will focus on extending the approach to other types of specifications such as reachability or more general properties specified by automata or temporal logic formula.

%%%%%%%%%%%%%%%%%%%%%%%%%%%%%%%%%
%%%%%%%%%%%%%%%%%%%%%%%%%%%%%%%%%
\bibliographystyle{abbrv}
\bibliography{Meyer_compositional}

\begin{thebibliography}{10}

\bibitem{althoff2014reachability}
M.~Althoff and B.~H. Krogh.
\newblock Reachability analysis of nonlinear differential-algebraic systems.
\newblock {\em IEEE Transactions on Automatic Control}, 59(2):371--383, 2014.

\bibitem{alur2000discrete}
R.~Alur, T.~A. Henzinger, G.~Lafferriere, and G.~J. Pappas.
\newblock Discrete abstractions of hybrid systems.
\newblock {\em Proceedings of the IEEE}, 88(7):971--984, 2000.

\bibitem{angeli_monotone}
D.~Angeli and E.~D. Sontag.
\newblock Monotone control systems.
\newblock {\em IEEE Transactions on Automatic Control}, 48(10):1684--1698,
  2003.

\bibitem{baier2008principles}
C.~Baier, J.-P. Katoen, and K.~G. Larsen.
\newblock {\em Principles of model checking}.
\newblock MIT press, 2008.

\bibitem{belta2007symbolic}
C.~Belta, A.~Bicchi, M.~Egerstedt, E.~Frazzoli, E.~Klavins, and G.~J. Pappas.
\newblock Symbolic planning and control of robot motion [grand challenges of
  robotics].
\newblock {\em IEEE Robotics \& Automation Magazine}, 14(1):61--70, 2007.

\bibitem{sassi2012reachability}
M.~A. Ben~Sassi, R.~Testylier, T.~Dang, and A.~Girard.
\newblock Reachability analysis of polynomial systems using linear programming
  relaxations.
\newblock In {\em Automated Technology for Verification and Analysis}, pages
  137--151. 2012.

\bibitem{bloem2012synthesis}
R.~Bloem, B.~Jobstmann, N.~Piterman, A.~Pnueli, and Y.~Sa'ar.
\newblock Synthesis of reactive (1) designs.
\newblock {\em Journal of Computer and System Sciences}, 78(3):911--938, 2012.

\bibitem{boskos2015decentralized}
D.~Boskos and D.~V. Dimarogonas.
\newblock Decentralized abstractions for feedback interconnected multi-agent
  systems.
\newblock In {\em IEEE Conference on Decision and Control}, pages 282--287,
  2015.

\bibitem{cassandras2009introduction}
C.~G. Cassandras and S.~Lafortune.
\newblock {\em Introduction to discrete event systems}.
\newblock Springer Science \& Business Media, 2009.

\bibitem{coogan2015mixed}
S.~Coogan and M.~Arcak.
\newblock Efficient finite abstraction of mixed monotone systems.
\newblock In {\em Hybrid Systems: Computation and Control}, pages 58--67. 2015.

\bibitem{dallal2015compositional}
E.~Dallal and P.~Tabuada.
\newblock On compositional symbolic controller synthesis inspired by small-gain
  theorems.
\newblock In {\em IEEE Conference on Decision and Control}, pages 6133--6138,
  2015.

\bibitem{girard2005reachability}
A.~Girard.
\newblock Reachability of uncertain linear systems using zonotopes.
\newblock In {\em Hybrid Systems: Computation and Control}, pages 291--305.
  2005.

\bibitem{girard2016safety}
A.~Girard, G.~G{\"o}ssler, and S.~Mouelhi.
\newblock Safety controller synthesis for incrementally stable switched systems
  using multiscale symbolic models.
\newblock {\em IEEE Transactions on Automatic Control}, 61(6):1537--1549, 2016.

\bibitem{henzinger1998you}
T.~A. Henzinger, S.~Qadeer, and S.~K. Rajamani.
\newblock You assume, we guarantee: Methodology and case studies.
\newblock In {\em Computer aided verification}, pages 440--451, 1998.

\bibitem{kim2015compositional}
E.~S. Kim, M.~Arcak, and S.~A. Seshia.
\newblock Compositional controller synthesis for vehicular traffic networks.
\newblock In {\em IEEE Conference on Decision and Control}, pages 6165--6171,
  2015.

\bibitem{kurzhanskiy2007ellipsoidal}
A.~A. Kurzhanskiy and P.~Varaiya.
\newblock Ellipsoidal techniques for reachability analysis of discrete-time
  linear systems.
\newblock {\em IEEE Transactions on Automatic Control}, 52(1):26--38, 2007.

\bibitem{le2013mode}
E.~Le~Corronc, A.~Girard, and G.~Goessler.
\newblock Mode sequences as symbolic states in abstractions of incrementally
  stable switched systems.
\newblock In {\em IEEE Conference on Decision and Control}, pages 3225--3230,
  2013.

\bibitem{le2010reachability}
C.~Le~Guernic and A.~Girard.
\newblock Reachability analysis of linear systems using support functions.
\newblock {\em Nonlinear Analysis: Hybrid Systems}, 4(2):250--262, 2010.

\bibitem{meyer2015adhs}
P.-J. Meyer, A.~Girard, and E.~Witrant.
\newblock Safety control with performance guarantees of cooperative systems
  using compositional abstractions.
\newblock In {\em IFAC Conference on Analysis and Design of Hyrbid Systems},
  pages 317--322, 2015.

\bibitem{pola2012integrated}
G.~Pola, A.~Borri, and M.~D. Di~Benedetto.
\newblock Integrated design of symbolic controllers for nonlinear systems.
\newblock {\em IEEE Transactions on Automatic Control}, 57(2):534--539, 2012.

\bibitem{pola2008approximately}
G.~Pola, A.~Girard, and P.~Tabuada.
\newblock Approximately bisimilar symbolic models for nonlinear control
  systems.
\newblock {\em Automatica}, 44(10):2508--2516, 2008.

\bibitem{pola2016decentralized}
G.~Pola, P.~Pepe, and M.~D. Di~Benedetto.
\newblock Decentralized supervisory control of networks of nonlinear control
  systems.
\newblock {\em arXiv preprint arXiv:1606.04647}, 2016.

\bibitem{pola2016symbolic}
G.~Pola, P.~Pepe, and M.~D. Di~Benedetto.
\newblock Symbolic models for networks of control systems.
\newblock {\em IEEE Transactions on Automatic Control}, 61(11):3663--3668,
  2016.

\bibitem{reissig2010abstraction}
G.~Reissig.
\newblock Abstraction based solution of complex attainability problems for
  decomposable continuous plants.
\newblock In {\em IEEE Conference on Decision and Control}, pages 5911--5917,
  2010.

\bibitem{reissig2016}
G.~Reissig, A.~Weber, and M.~Rungger.
\newblock Feedback refinement relations for the synthesis of symbolic
  controllers.
\newblock {\em IEEE Transactions on Automatic Control}, 62(4):1781--1796, 2016.

\bibitem{tabuada2009symbolic}
P.~Tabuada.
\newblock {\em Verification and control of hybrid systems: a symbolic
  approach}.
\newblock Springer, 2009.

\bibitem{tabuada2006linear}
P.~Tabuada and G.~J. Pappas.
\newblock Linear time logic control of discrete-time linear systems.
\newblock {\em IEEE Transactions on Automatic Control}, 51(12):1862--1877,
  2006.

\bibitem{tazaki2008bisimilar}
Y.~Tazaki and J.-i. Imura.
\newblock Bisimilar finite abstractions of interconnected systems.
\newblock In {\em Hybrid Systems: Computation and Control}, pages 514--527.
  2008.

\bibitem{zamani2015symbolic}
M.~Zamani, A.~Abate, and A.~Girard.
\newblock Symbolic models for stochastic switched systems: A discretization and
  a discretization-free approach.
\newblock {\em Automatica}, 55:183--196, 2015.

\bibitem{zamani2014symbolic}
M.~Zamani, P.~M. Esfahani, R.~Majumdar, A.~Abate, and J.~Lygeros.
\newblock Symbolic control of stochastic systems via approximately bisimilar
  finite abstractions.
\newblock {\em IEEE Transactions on Automatic Control}, 59(12):3135--3150,
  2014.

\bibitem{zamani2012symbolic}
M.~Zamani, G.~Pola, M.~Mazo, and P.~Tabuada.
\newblock Symbolic models for nonlinear control systems without stability
  assumptions.
\newblock {\em IEEE Transactions on Automatic Control}, 57(7):1804--1809, 2012.

\end{thebibliography}

\end{document}